
\documentclass[conference,final]{IEEEtran}
\IEEEoverridecommandlockouts
% \usepackage[letterpaper,top=54pt, left=51pt, right=51pt, bottom=54pt]{geometry}
% \overrideIEEEmargins
% \usepackage{lcsys}
\usepackage{cite}
\usepackage{amsmath,amssymb,amsfonts}
\usepackage{algorithmic}
\usepackage{graphicx}
\usepackage{textcomp}
\usepackage{orcidlink}
\usepackage{wrapfig}

\usepackage{bm}
\usepackage{enumitem}
\usepackage{cleveref}
\usepackage{xcolor}
\usepackage{comment}
\usepackage{subcaption}
% \captionsetup{font=footnotesize}
% \captionsetup[sub]{font=footnotesize}
\usepackage{amsthm}

\newtheorem{theorem}{Theorem}
\newtheorem{property}{Property}
\newtheorem{definition}{Definition}
\newtheorem{example}{Example}

\newtheorem{assumption}{Assumption}
\newtheorem{lemma}{Lemma}
\newtheorem{proposition}{Proposition}

\Crefname{property}{Property}{Properties}
\Crefname{assumption}{Assumption}{Assumptions}

\allowdisplaybreaks

\usepackage{tikz}

\newcommand\copyrighttext{%
  \footnotesize{\color{blue} \textcopyright \the\year{} IEEE. Personal use of this material is permitted. Permission from IEEE must be obtained for all other uses, including reprinting/republishing this material for advertising or promotional purposes, collecting new collected works for resale or redistribution to servers or lists, or reuse of any copyrighted component of this work in other works.}}

\newcommand\copyrightnotice{%
\begin{tikzpicture}[remember picture,overlay]
\node[anchor=north,yshift=-10pt] at (current page.north) {\fbox{\parbox{\dimexpr0.9\textwidth-\fboxsep-\fboxrule\relax}{\copyrighttext}}};
\end{tikzpicture}%
}
\renewcommand\fbox{\fcolorbox{red}{white}}
\setlength{\fboxrule}{2pt} % Set fbox rule width to 2pt

% \def\BibTeX{{\rm B\kern-.05em{\sc i\kern-.025em b}\kern-.08em
%     T\kern-.1667em\lower.7ex\hbox{E}\kern-.125emX}}
% \markboth{\journalname, VOL. XX, 2025}
% {Chen \MakeLowercase{\textit{et al.}}: Hierarchical Population Games}

\begin{document}
\title{Hierarchical Decision-Making in Population Games}
\author{Yu-Wen Chen, %\,\orcidlink{0000-0002-7084-7632}
Nuno C. Martins,
Murat Arcak%\,\orcidlink{0000-0001-9060-4032}
\thanks{This work was supported by NSF Grants 2135791 and 2135561 and AFOSR FA95502310467.}%
\thanks{Yu-Wen Chen and Murat Arcak are with the Department of Electrical Engineering and Computer Sciences, University of California, Berkeley, CA, USA. {\tt\small \{yuwen\_chen, arcak\}@berkeley.edu}.}%
\thanks{Nuno C. Martins is with the Department of Electrical and Computer Engineering, University of Maryland, College Park, MD, USA. {\tt\small nmartins@umd.edu}.}%
}

\maketitle

\copyrightnotice
\vspace{-10pt}

\begin{abstract}
This paper introduces a hierarchical framework for population games, where individuals delegate decision-making to proxies that act within their own strategic interests. This framework extends classical population games, where individuals are assumed to make decisions directly, to capture various real-world scenarios involving multiple decision layers. We establish equilibrium properties and provide convergence results for the proposed hierarchical structure. Additionally, based on these results, we develop a systematic approach to analyze population games with general convex constraints, without requiring individuals to have full knowledge of the constraints as in existing methods. We present a navigation application with capacity constraints as a case study.
\end{abstract}
% \begin{IEEEkeywords}
% Evolutionary dynamics, population games, hierarchical control
% \end{IEEEkeywords}

%%%%%%%%%%%%%%%%%%%%%%%%
%%%%%%%%%%%%%%%%%%%%%%%%
%%%%%%%%%%%%%%%%%%%%%%%%
\section{Introduction}
\label{sec:intro}
%%%%%%%%%%%%%%%%%%%%%%%%

Population games study the aggregate behavior of strategic agents in large populations who adapt their actions via learning to maximize payoffs. 
%The origins of this field trace back to the seminal work of \cite{Smith1973}, which introduced the concept of evolutionary stable strategies (ESS) to explain biological competition. While initially rooted in biology, the framework gained attention in economics during the late 1980s. Later, \cite{Hofbauer1998} systematically established the connections between biological and economic perspectives. More recently, \cite{Sandholm2010} provided a rigorous mathematical foundation for modeling evolutionary dynamics.
% This motivates widespread applications, including information fusion \cite{Deng2016}, disease control \cite{Martins2023}, network security \cite{Tian2019}, smart grid control \cite{Mojica2014}, extremum seeking control \cite{Poveda2015}, and resource allocation \cite{Semasinghe2015}. 
A fundamental research question is determining conditions under which learning rules and payoff mechanisms ensure convergence to Nash equilibria. Early studies focused on memoryless payoff mechanisms, such as potential games \cite{Sandholm2001}. Later, building on \cite{Hofbauer2009},
%introduced contractive games, a broader class that includes concave potential games. 
the paper \cite{Fox2013} introduced the concept of $\delta$-passivity to characterize stability properties and to allow for dynamic payoff mechanisms.
%, which was further extended to $\delta$-dissipativity \cite{Arcak2020}.
% Recognizing the interconnection structure between evolutionary dynamics and memoryless payoff mechanisms, researchers employed
%Recently, passivity-based approaches enabled the study of dynamic payoff mechanisms. \cite{Fox2013} introduced the concept of $\delta$-passivity to characterize stability properties, which was further extended to $\delta$-dissipativity \cite{Arcak2020}. 
%Passivity-based
A broader $\delta$-dissipativity approach was presented in \cite{Arcak2020}. Other passivity approaches were applied to imitation dynamics in \cite{Mabrok2021}.
%where $\delta$-dissipativity does not hold. 
%While these passivity-based approaches rely on different input-output 
Further advancing this line of work, \cite{Martins2024} brought the notion of counterclockwise passivity (CCW) \cite{Angeli2006} to  population games.

%Existing research on population games assumes 
While the studies discussed above assume that individuals directly select their strategies, in many applications, decision-making is mediated by proxies with more information or computational power. For example, investors delegate portfolio management to financial managers, and travelers follow navigation applications in selecting routes. % We refer to the former as a direct framework and the latter as a 
This leads to hierarchical, rather than direct,
%framework, 
decision-making 
as illustrated in \Cref{fig:direct,fig:toy_diagram}. 
%This paper introduces a hierarchical framework for population games that generalizes the direct framework and captures these practical decision-making structures. 
Moreover, proxies may have 
strategy preferences
%their \emph{strategy distributions of interest}, i.e., the restricted sets capture their strategy preferences. By incorporating these constraints into the hierarchical structure, we provide a promising approach for addressing population games with convex constraints.
that constrain the social state to a subset of the simplex in which
it evolves.
% \begin{figure}[!ht]
%     \centering
%     \begin{subfigure}[t]{0.45\linewidth}
%     \centering
%     \includegraphics[width=.8\textwidth]{images/direct.png}
%     \caption{direct framework}
%     \label{fig:direct}
%     \end{subfigure}
%     %\quad\quad
%     \begin{subfigure}[t]{0.45\linewidth}
%     \centering
%     \includegraphics[width=.8\textwidth]{images/proxy.png}
%     \caption{hierarchical framework}
%     \label{fig:proxy}
%     \end{subfigure}
%     \caption{Comparisons}
%     \label{fig:frameworks}
% \end{figure}
%To the best of our knowledge, 
Limited work has addressed population games with such constraints.
%beyond requiring the social state to remain within a simplex \cite{Julian2018, Juan2022, Chen2024}. 
In \cite{Julian2018}, the authors considered box constraints which do not account for coupling restrictions between strategies. Moreover, the work requires a demanding information structure where the individuals have access to all system-wide constraints. In \cite{Juan2022}, 
% equality-constrained population games are analyzed using
a payoff penalty  was used to ensure convergence to the feasible set, but the constraints were not satisfied at all times. Recently, \cite{Chen2024} studied constrained best-response dynamics for convex constraints; however, as in \cite{Julian2018}, these constraints are assumed to be common knowledge.
%, allowing individuals to react accordingly.

%%%%%%%%%%%%%%%%%%%%%%%%

The main contributions of this paper are twofold. First, we introduce a hierarchical framework for population games,
which captures practical scenarios involving multiple decision layers.
%, where individuals delegate decision-making to proxies that act within their strategic interests.
% which generalizes the classical direct framework and captures practical scenarios involving multiple decision layers.
% , such as investment management and navigation. 
With properly designed proxies and payoff structure, we provide convergence results akin to those in classical population games. Second, we 
%develop an approach based on this hierarchical framework to 
address population games with general convex constraints within this framework. Unlike existing methods \cite{Julian2018,Juan2022,Chen2024} requiring constraints known to individuals, we incorporate constraints through the hierarchical structure.

%The remainder of the paper is organized as follows. 
\Cref{sec:pre} gives an overview of population games. \Cref{sec:prob} introduces the hierarchical framework and formulates the main problem. \Cref{sec:main} presents the theoretical results. \Cref{sec:app} illustrates the results on a navigation example. %Finally, \Cref{sec:conclu} gives conclusions.

%%%%%%%%%%%%%%%%%%%%%%%%
%%%%%%%%%%%%%%%%%%%%%%%%
%%%%%%%%%%%%%%%%%%%%%%%%
\section{Preliminaries}
\label{sec:pre}

For a classical (single) population game, each individual in the population can choose from a set of strategies $\mathcal{S}=\{1,\ldots,d\}$. Let $\Delta^d=\left\{\bm{v}\in\mathbb{R}^d_+:\bm{1}^T\bm{v}=1\right\}$ be the probability simplex in $\mathbb{R}^d$. Denote the state $\bm{s}(t)=[s_1(t),\ldots,s_d(t)]^T\in\Delta^d$, where $s_i(t)$ describes the proportion of the population selecting strategy~$i$ at time $t$. A payoff function $F:\Delta^d\rightarrow\mathbb{R}^d$ maps a state $\bm{s}(t)$ to a payoff vector $F(\bm{s}(t))$ with $F_i(\bm{s}(t))$ representing the payoff for choosing strategy $i$.
% We generalize the definitions for \emph{best response} and \emph{Nash equilibrium} on the simplex $\Delta^d$ to a general compact set $\mathcal{C}\subseteq\Delta^d$.

\begin{definition}[Best response to $\bm{\pi}$ within $\mathcal{C}$]\label{def:BR}
    The best response to a payoff vector $\bm{\pi}\in\mathbb{R}^d$ within a compact set $\mathcal{C}\subseteq\Delta^d$ is denoted as $\textit{BR}_{\mathcal{C}}(\bm{\pi})=\mathop{\arg\max}_{\bm{s}\in\mathcal{C}}\ \bm{s}^T\bm{\pi}$.
    % \begin{align}
    %     \textit{BR}_{\mathcal{C}}(\bm{\pi})=\left\{\bm{x}\in\mathcal{C}:\mathop{\arg\max}_{\bm{x}\in\mathcal{C}}\ \bm{x}^T\bm{\pi}\right\}
    % \end{align}
\end{definition}

\begin{definition}[Nash equilibrium of $F$ within $\mathcal{C}$]\label{def:NE}
    Let $\mathcal{C}\subseteq\Delta^d$ be compact. 
    $\bm{s}^*\in\mathcal{C}$ is a Nash equilibrium for the payoff function $F$ within $\mathcal{C}$ if $(\bm{s}-\bm{s}^*)^TF(\bm{s}^*)\leq0$, for $\bm{s}\in\mathcal{C}$.
    % \begin{align}
    %     (\bm{x}-\bar{\bm{x}})^TF(\bar{\bm{x}})\leq0,\quad \forall \bm{x}\in\mathcal{C}.
    % \end{align}
    % Equivalently, $\bm{x}^*\in\textit{BR}_{\mathcal{C}}\left(F(\bm{x}^*)\right)$.
    The set of all Nash equilibria for $F$ within $\mathcal{C}$ is denoted as $\text{NE}_{\mathcal{C}}(F)$.
\end{definition}
 
Given payoff vectors $\bm{\pi}(t)\in\mathbb{R}^d$ for all $t$, the individuals switch between strategies to receive higher payoffs, which leads to a dynamics in $\bm{s}(t)$. An evolutionary dynamics model (EDM) is used to describe this learning dynamics:
\begin{align}
    \dot{\bm{s}}(t)=\mathcal{V}(\bm{s}(t),\bm{\pi}(t)), \quad t\geq0, \label{eq:EDM}
\end{align}
where $\mathcal{V}:\Delta^d\times\mathbb{R}^d\rightarrow\mathbb{R}^d$ is defined based on the switching behaviors. Some common EDMs include Smith \cite{Smith1984}, Brown-von Neumann-Nash (BNN) \cite{Brown1950}, and best response dynamics \cite[Chapter~6]{Sandholm2010}. While the best response dynamics is defined over the simplex $\Delta^d$, a generalization to it is defined over a compact convex subset of $\Delta^d$ as follows.
%, which we will adopt in the case study in \Cref{sec:app}.

% We provide some examples of EDMs found in the literature. Let $\hat{\bm{\pi}}(t)=\bm{\pi}(t)-\bm{x}(t)^T\bm{\pi}(t)\bm{1}$ denote the excess payoff vector and $[z]_+=\max\{z,0\}$.

% \begin{example}[BNN \cite{Brown1950}]
%     $\dot{x}_i(t)=[\hat{\pi}_i(t)]_+ - x_i(t)\sum_{j=1}^d[\hat{\pi}_j]_+$.
% \end{example}

% \begin{example}[Smith \cite{Smith1984}]
%     \begin{align*}
%         \dot{x}_i(t)=\sum_{j=1}^d x_j(t)[\pi_i(t)-\pi_j(t)]_+-x_i(t)\sum_{j=1}^d[\pi_j(t)-\pi_i(t)]_+.
%     \end{align*}
% \end{example}

\begin{definition}[Constrained best response dynamics \cite{Chen2024}]
    \begin{align}
        % \dot{\bm{s}}(t)\in \mathop{\arg\max}_{\bm{y}\in\mathcal{C}\subseteq\Delta^d}\ \bm{y}^T\bm{\pi}(t) - \bm{s}(t),
        \dot{\bm{s}}(t)\in \textit{BR}_\mathcal{C}\left(\bm{\pi}(t)\right) - \bm{s}(t), \quad\mathcal{C}\subseteq\Delta^d\text{ compact convex}.\label{eq:CBR}
    \end{align}
\end{definition}

\begin{definition}[Positive Correlation]
    An EDM (\ref{eq:EDM}) is positively correlated if $\mathcal{V}(\bm{s},\bm{\pi})\neq \bm{0} \implies \bm{\pi}^T\mathcal{V}(\bm{s},\bm{\pi})>0$.
\end{definition}

\begin{definition}[Nash Stationarity w.r.t. $\mathcal{C}$]
    An EDM (\ref{eq:EDM}) is Nash stationary w.r.t. $\mathcal{C}$ if $\mathcal{V}(\bm{s},\bm{\pi})=\bm{0}\iff \bm{s}\in \textit{BR}_\mathcal{C}(\bm{\pi})$.
\end{definition}
While BNN, Smith, and best response dynamics are Nash stationary w.r.t. $\Delta^d$, (\ref{eq:CBR}) is Nash stationary w.r.t. $\mathcal{C}$.\footnote{Since the best response dynamics and (\ref{eq:CBR}) are differential inclusions, a modified version of the Nash stationarity is used, as in \cite[Theorem~6.1.4]{Sandholm2010}.}

%%%%%%%%%%%%%%%%%%%%%%%%
%%%%%%%%%%%%%%%%%%%%%%%%
%%%%%%%%%%%%%%%%%%%%%%%%
\section{Problem Formulation}
\label{sec:prob}

We first present the core formulation and introduce the key notation using a two-layer example in \Cref{subsec:toy_re}. We extend this to the general hierarchical framework in \Cref{subsec:hier}, and formulate several problems in this framework in \Cref{subsec:description}.
%%%%%%%%%%%%%%%%%%%%%%%%
\subsection{Illustrative example}
\label{subsec:toy_re}

% In this subsection, we describe the model of the toy example introduced in \Cref{ex:toy}. The next subsection generalizes this to the hierarchical framework.

Consider a group of investors who allocate capital across three investment targets through two managers. This leads to two layers as in \Cref{fig:toy_diagram}: the first contains the investors; the second comprises two managers. We assume each investor has one unit of capital and must allocate it to either manager. Manager 1 distributes the funds across three investment targets, while Manager 2 invests only in the first and second.

Each decision group has a state variable representing its strategy distribution over its strategy set. Specifically, for each decision group~$j$ in layer~$i$, denoted as the $(i,j)$-group, we define a state variable $\bm{s}^{i,j}(t)\in\Delta^{d^{i,j}}$, where $d^{i,j}$ is the number of available strategies to the $(i,j)$-group. The $k$-th entry of $\bm{s}^{i,j}(t)$, denoted as $s^{i,j}_k(t)$, represents the proportion of the $(i,j)$-group selecting the $k$-th strategy at time $t$.
% For example, the Blue manager group, i.e., the $(2,1)$-group, has three investment targets $d^{2,1}=3$, and $s^{2,1}_1(t)$ represents the proportion of funds allocated to the first investment target at time $t$.

We define the social state $\bm{x}(t)\in\Delta^3$ as the final distribution over the three final strategies (investment targets) at time $t$:
\begin{subequations}
\begin{align}
    \bm{x}(t)&=\begin{bmatrix}
        x_1(t)\\x_2(t)\\x_3(t)
    \end{bmatrix}=\begin{bmatrix}
        s^{1,1}_1(t) s^{2,1}_1(t)+s^{1,1}_2(t)s^{2,2}_1(t) \\
        s^{1,1}_1(t) s^{2,1}_2(t)+s^{1,1}_2(t)s^{2,2}_2(t) \\
        s^{1,1}_1(t) s^{2,1}_3(t)
    \end{bmatrix}\label{eq:toy_x}
    \\
    &=\underbrace{\begin{bmatrix}
        1 & 0 & 0 & 1 & 0 \\
        0 & 1 & 0 & 0 & 1 \\
        0 & 0 & 1 & 0 & 0
    \end{bmatrix}}_{:=\bm{W}^2 \in \mathbb{R}^{3\times5}}
    \underbrace{\vphantom{\begin{bmatrix}
        0 \\ 0 \\ 0
    \end{bmatrix}}\begin{bmatrix}
        \bm{s}^{2,1}(t) & \bm{0}_{3\times1} \\
        \bm{0}_{2\times1} & \bm{s}^{2,2}(t)
    \end{bmatrix}}_{:=\bm{T}^2(t) \in \mathbb{R}^{5\times2}}
    \bm{s}^{1,1}(t).\label{eq:toy_x_compact}
\end{align}
\end{subequations}
Equation (\ref{eq:toy_x}) follows from the paths in \Cref{fig:toy_diagram}, while (\ref{eq:toy_x_compact}) provides a compact representation. Here, $\bm{T}^2(t)$ models the redistribution of funds by the managers in layer 2 at time $t$, and $\bm{W}^2$ aggregates the resulting allocations. Further details on $\bm{T}^2(t)$ and $\bm{W}^2$ are provided in \Cref{subsec:hier}.

\begin{figure}[!t]
    \centering
    \begin{subfigure}[t]{0.17\textwidth}
    % \centering
    \includegraphics[height=.12\textheight]{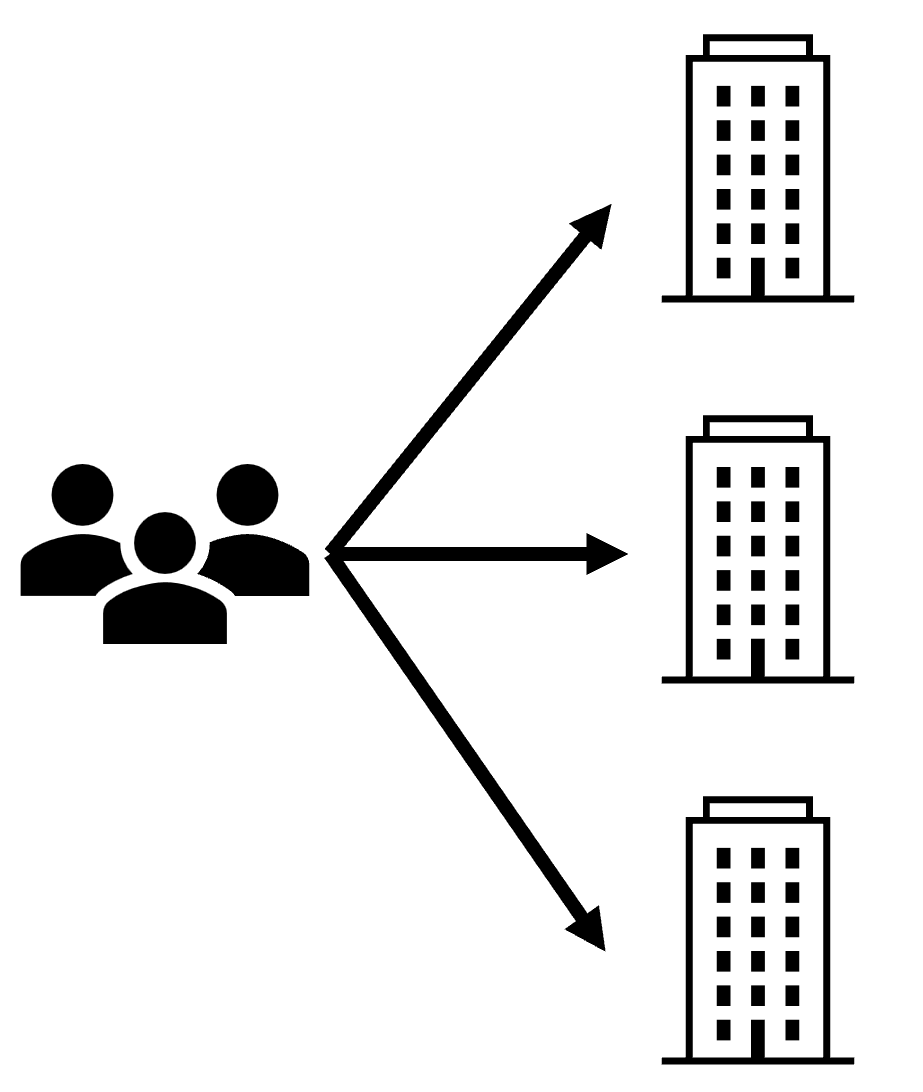}
    \caption{Direct framework}
    \label{fig:direct}
    \end{subfigure}
    \quad
    \begin{subfigure}[t]{0.27\textwidth}
    \centering
    \includegraphics[height=.14\textheight]{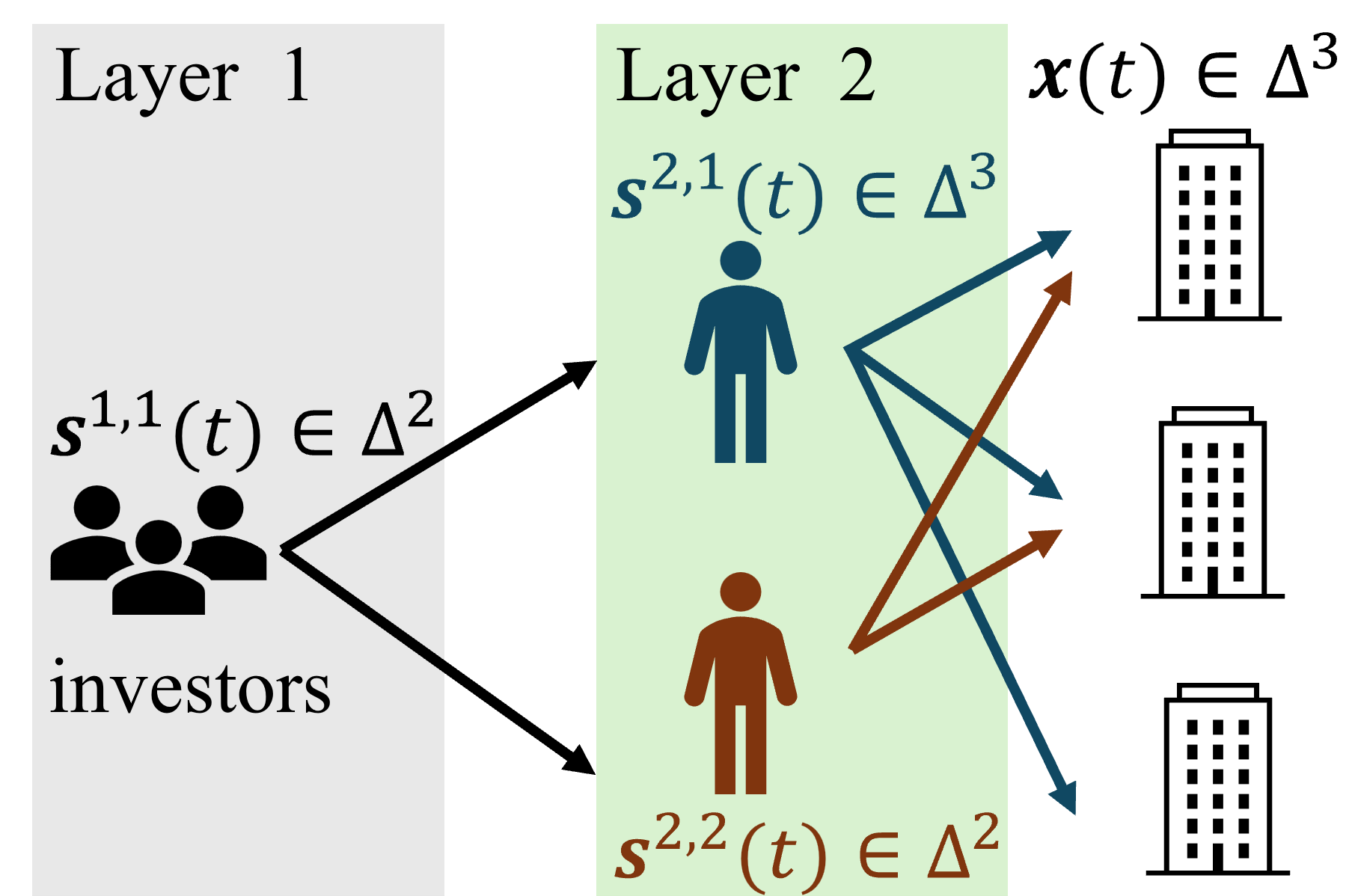}
    \caption{Two-layer hierarchy}
    \label{fig:toy_diagram}
    \end{subfigure}
    \par\medskip
    \begin{subfigure}[t]{0.48\textwidth}
    \centering
    \includegraphics[width=\textwidth]{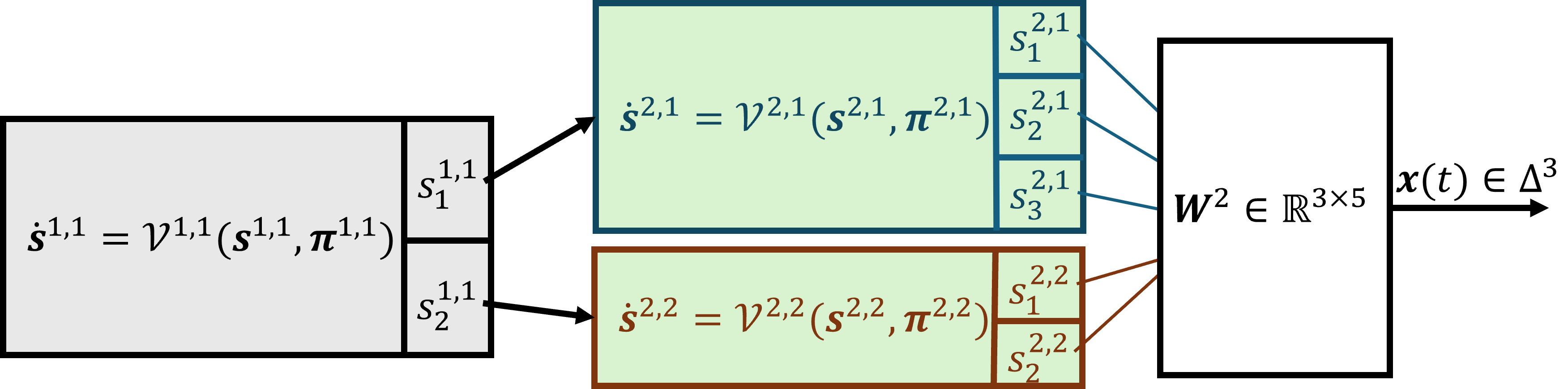}
    \caption{Two-layer population game formulation for (b)}
    \label{fig:toy_math}
    \end{subfigure}

    \setlength{\belowcaptionskip}{-13pt}
    \caption{In contrast to direct framework (a) in most literature, we provide an illustrative example with two-layer hierarchy.}
    \label{fig:toy}
    \setlength{\belowcaptionskip}{0pt}
\end{figure}

With the defined state variables $\bm{s}^{i,j}(t)$ and social state $\bm{x}(t)$, we now describe the system dynamics. A payoff function $F:\Delta^3\rightarrow\mathbb{R}^3$ maps the social state $\bm{x}(t)$ to a payoff vector $F(\bm{x}(t))=\left[F_1(\bm{x}(t)),F_2(\bm{x}(t)),F_3(\bm{x}(t))\right]^T$, where $F_i(\bm{x}(t))$ represents the payoff for investing in the $i$-th investment target. Manager 1 updates portfolio allocations based on $F(\bm{x}(t))$, while Manager 2 updates based on $F_1(\bm{x}(t))$ and $F_2(\bm{x}(t))$. In addition, investors update their choices based on manager performance. A natural performance metric for managers is their average payoff. Therefore, we can model the updates by:
% \begin{subequations}\label{eq:toy_dyn}
% \begin{align}
%     \dot{\bm{s}}^{1,1}(t)&=\mathcal{V}^{1,1}\left(\bm{s}^{1,1}(t),\bm{\pi}^{1,1}(t)\right) \label{eq:toy_dyn_first}\\
%     \dot{\bm{s}}^{2,1}(t)&=\mathcal{V}^{2,1}\left(\bm{s}^{2,1}(t),\bm{\pi}^{2,1}(t)\right) \\
%     \dot{\bm{s}}^{2,2}(t)&=\mathcal{V}^{2,2}\left(\bm{s}^{2,2}(t),\bm{\pi}^{2,2}(t)\right) \label{eq:toy_dyn_last},\quad t\geq0,
% \end{align}
% \end{subequations}
\begin{align}
    \dot{\bm{s}}^{i,j}(t)&=\mathcal{V}^{i,j}\left(\bm{s}^{i,j}(t),\bm{\pi}^{i,j}(t)\right) \label{eq:toy_dyn},\quad t\geq0,
\end{align}
for $(i,j)\in\left\{(1,1),(2,1),(2,2)\right\}$. Here, $\mathcal{V}^{i,j}$ is the EDM defined in \Cref{sec:pre} and $\bm{\pi}^{i,j}(t)$ is the payoff at time $t$, for the $(i,j)$-group, while the payoffs are given as:
\begin{subequations}\label{eq:toy_pay}
\begin{align}
   \bm{\pi}^{2,1}(t)&=F(\bm{x}(t)) \label{eq:toy_pay_first} \\
   \bm{\pi}^{2,2}(t)&=\left[F_1(\bm{x}(t)),F_2(\bm{x}(t))\right]^T\label{eq:toy_pay_2nd}\\
   \bm{\pi}^{1,1}(t)&=\scalebox{1}{$ \left[\sum\limits_{l=1}^3s^{2,1}_l(t)F_l(\bm{x}(t)),\sum\limits_{l=1}^2s^{2,2}_l(t)F_l(\bm{x}(t))\right]^T$.}\label{eq:toy_pay_last}
   \end{align}
\end{subequations}
The overall structure is visualized in \Cref{fig:toy_diagram,fig:toy_math}.

%%%%%%%%%%%%%%%%%%%%%%%%
\subsection{Hierarchical framework}
\label{subsec:hier}

Generalizing the notation $\bm{T}^i(t)$ and $\bm{W}^i$ introduced in (\ref{eq:toy_x_compact}), we define a transformation matrix $\bm{T}^i(t)$ by diagonally concatenating all state variables $\bm{s}^{i,j}(t)\in\Delta^{d^{i,j}}$ in layer~$i$. Suppose there are $n^i$ decision groups (state variables) in layer~$i$, then for all $t\geq0$,
% \begin{align}
%     \bm{T}^i(t)=\begin{bmatrix}
%         \bm{s}^{i,1}(t) & & \\
%         & \ddots & \\
%         & & \bm{s}^{i,n^i}(t)
%     \end{bmatrix} \in\mathbb{R}^{o^i \times n^i},\quad t\geq0,\label{eq:T}
% \end{align}
\begin{align}
    \bm{T}^i(t)= \mathrm{blkdiag}\left(\bm{s}^{i,1}(t),\ldots,\bm{s}^{i,n^i}(t)\right) \in\mathbb{R}^{o^i \times n^i},\label{eq:T}
\end{align}
where $\mathrm{blkdiag}$ denotes the block diagonal matrix composed by the arguments and $o^i=\sum_j d^{i,j}$ is the total number of outputs of layer~$i$.
% Given a population distribution $\bm{m}^i(t) \in \Delta^{n^i}$ input to layer $i$, the term $\bm{T}^i(t) \bm{m}^i(t)$ describes how the decision groups in layer~$i$ redistribute the population at time $t$.
% For example, in the toy example introduced in \Cref{subsec:toy_re}, $\bm{s}^{1,1}(t)$ is the input to layer~2, and $\bm{T}^2(t) \bm{s}^{1,1}(t)$ represents the proportion of the population assigned to the $o^2=5$ outputs (three from the Blue manager and two from the Brown manager).
Since outputs from one layer may be grouped into decision groups in the next layer, we introduce an aggregation matrix to formalize this transition. These aggregation matrices define how outputs from layer~$i$ are assigned to decision groups in layer~$i+1$, fully characterizing the hierarchical structure. An aggregation matrix is defined as:
\begin{align}
    \bm{W}^i=\begin{bmatrix}
        \bm{w}_1^i,\ldots,\bm{w}^i_{o^i}
    \end{bmatrix}\in\mathbb{R}^{n^{i+1}\times o^i},\label{eq:W}
\end{align}
where $\bm{w}_k^i\in\mathbb{R}^{n^{i+1}}$ represents how the $k$-th output of layer~$i$ distributes to the next layer.
Let $n^{L+1}=d$ represent the number of final strategies.
%For example, consider the fourth column of $\bm{W}^2$ in (\ref{eq:toy_x_compact}), which corresponds to the fourth output of layer 2, i.e., the first output of the Brown manager. This output flows entirely to the first input of the next layer (the first component of the social state in this example). Therefore, $\bm{w}^2_4=[1,0,0]^T$.
The transformation matrix $\bm{T}^i(t)$ and aggregation matrix $\bm{W}^i$ satisfy the following properties:
\begin{property}\label{p:nonnegative}
    $\bm{T}^i(t)$ and $\bm{W}^i$ are elementwise nonnegative.
\end{property}
\begin{property}\label{p:sum_one}
    $\bm{1}^T\bm{T}^i(t)=\bm{1}^T$ and $\bm{1}^T\bm{W}^i=\bm{1}^T$.
\end{property}

We now formulate the general $L$-layer hierarchical framework using the notation introduced in (\ref{eq:T}) and (\ref{eq:W}). Suppose layer 1 consists of a single decision group representing the entire population, while each subsequent layer $i$ has $n^i \geq 1$ decision groups for $i = 2, \dots, L$. The decision group $j$ in layer $i$ is associated with a state variable $\bm{s}^{i,j}(t) \in \Delta^{d^{i,j}}$, which describes the group’s distribution over the group's strategy set.

Let the hierarchy be characterized by $\bm{W}^1$ to $\bm{W}^L$.
Note that since layer $1$ has only one decision group, $\bm{T}^1(t)=\bm{s}^{1,1}(t)$ and $\bm{W}^1$ is the identity matrix.
Then, we can derive the distribution of the entire population input to layer $i$ at time $t$ by
\begin{align}
    \bm{m}^{i}(t)=\bm{W}^{i-1}\bm{T}^{i-1}(t)\ldots\bm{W}^1\bm{s}^{1,1}(t)\in\Delta^{n^i}.\label{eq:M}
\end{align}
% where the $j$-th component, denoted by $M_j^i(t)$, represents the proportion of the population selecting the $(i,j)$-decision group at time $t$.
% Given the hierarchical structure characterized by $\bm{W}^1$ to $\bm{W}^L$ and the state variables $\bm{s}^{i,j}(t)$, {\color{red}$\bm{m}^i(t)$?} the social state $\bm{x}(t) \in \Delta^d$ is determined as the output of $\bm{W}^L \in \mathbb{R}^{d\times o^L}$, where $d$ represents the number of final strategies.
Further, $\bm{W}^L\bm{T}^L(t)\bm{m}^L(t)$ captures the distribution of the entire population over the $d$ final strategies, which we call the social state. We denote the social state as $\bm{x}(t)\in\Delta^d$:
\begin{align}
    \bm{x}(t)&=\bm{W}^L\bm{T}^L(t)\bm{m}^L(t)=\Pi_{i=1}^L\bm{W}^i\bm{T}^i(t) \nonumber\\
    &=\bm{W}^L\bm{T}^L(t)\ldots\bm{W}^2\bm{T}^2(t)\bm{W}^1\bm{s}^{1,1}(t), \quad t\geq0. \label{eq:x}
\end{align}
% {\color{blue}Note that $\bm{T}^1(t)=\bm{s}^{1,1}(t)$}.
Using \Cref{p:nonnegative} and \Cref{p:sum_one}, we can verify that $\bm{x}(t)\in\Delta^d$, confirming that the population is properly distributed over the final $d$ strategies. However, unlike in \Cref{fig:direct}, where the individual selects a final strategy directly, here, decisions are made indirectly through multiple layers. Therefore, we refer to each layer $i$, for $i=2,\dots,L$, as a proxy layer.

To complete the formulation, we describe the dynamics for $\bm{s}^{i,j}(t)$. Let $\bm{p}(t)\in\mathbb{R}^d$ represent the $d$ payoffs for selecting each final strategy. We temporarily assume $\bm{p}(t)$ is given by a payoff function $F: \Delta^d\rightarrow\mathbb{R}^d$, mapping the social state $\bm{x}(t)$ to the payoff vector $F(\bm{x}(t))$ for all $t$:
\begin{align}
    \bm{p}(t)=F(\bm{x}(t)), \quad t\geq0. \label{eq:F}
\end{align}
Later, we will extend this to a dynamic model in \Cref{subsec:dyn}.
Denote $\bm{\pi}^{i,j}(t) \in \mathbb{R}^{d^{i,j}}$ as the $d^{i,j}$ payoffs for selecting each strategy in $(i,j)$-group. Recall from \Cref{subsec:toy_re} that $\bm{\pi}^{i,j}(t)$ is determined by the average payoff of the group it joins via back-propagating $\bm{p}(t)$, as in (\ref{eq:toy_pay}). 
% Here, we express this relationship compactly.
Let $\bm{\pi}^i(t)$ be the concatenation of payoffs $\bm{\pi}^{i,j}(t)$ in layer $i$:
\begin{align}
    \bm{\pi}^i(t)=\begin{bmatrix}
        \bm{\pi}^{i,1}(t) \\ \vdots \\ \bm{\pi}^{i,n^i}(t)
    \end{bmatrix}\in\mathbb{R}^{o^i}, \quad t\geq0. \label{eq:pi_stack}
\end{align}
Then, the payoffs are determined by
\begin{subequations}\label{eq:pi}
\begin{align}
    % \bm{\pi}^i={\bm{T}^{i+1}}^T{\bm{W}^{i+1}}^T\ldots {\bm{T}^L}^T{\bm{W}^L}^TF(\bm{x})
    \bm{\pi}^L(t)&={\bm{W}^L}^T\bm{p}(t)\label{eq:pi_L}\\
    \bm{\pi}^i(t)&={\bm{W}^i}^T{\bm{T}^{i+1}(t)}^T\bm{\pi}^{i+1}(t), \quad i=L-1,\ldots,1.\label{eq:pi_i}
\end{align}
\end{subequations}
With (\ref{eq:pi}), $\bm{s}^{i,j}(t)$ evolves and follows, as in (\ref{eq:toy_dyn}), the EDM:
\begin{align}
    \dot{\bm{s}}^{i,j}(t)=\mathcal{V}^{i,j}\left(\bm{s}^{i,j}(t),\bm{\pi}^{i,j}(t)\right), \quad t\geq0. \label{eq:dyn}
\end{align}
The hierarchical framework is illustrated in \Cref{fig:hier_diagram}.

% \captionsetup{belowskip=0pt}
\begin{figure}[!ht]
    \centering
    \includegraphics[width=\linewidth]{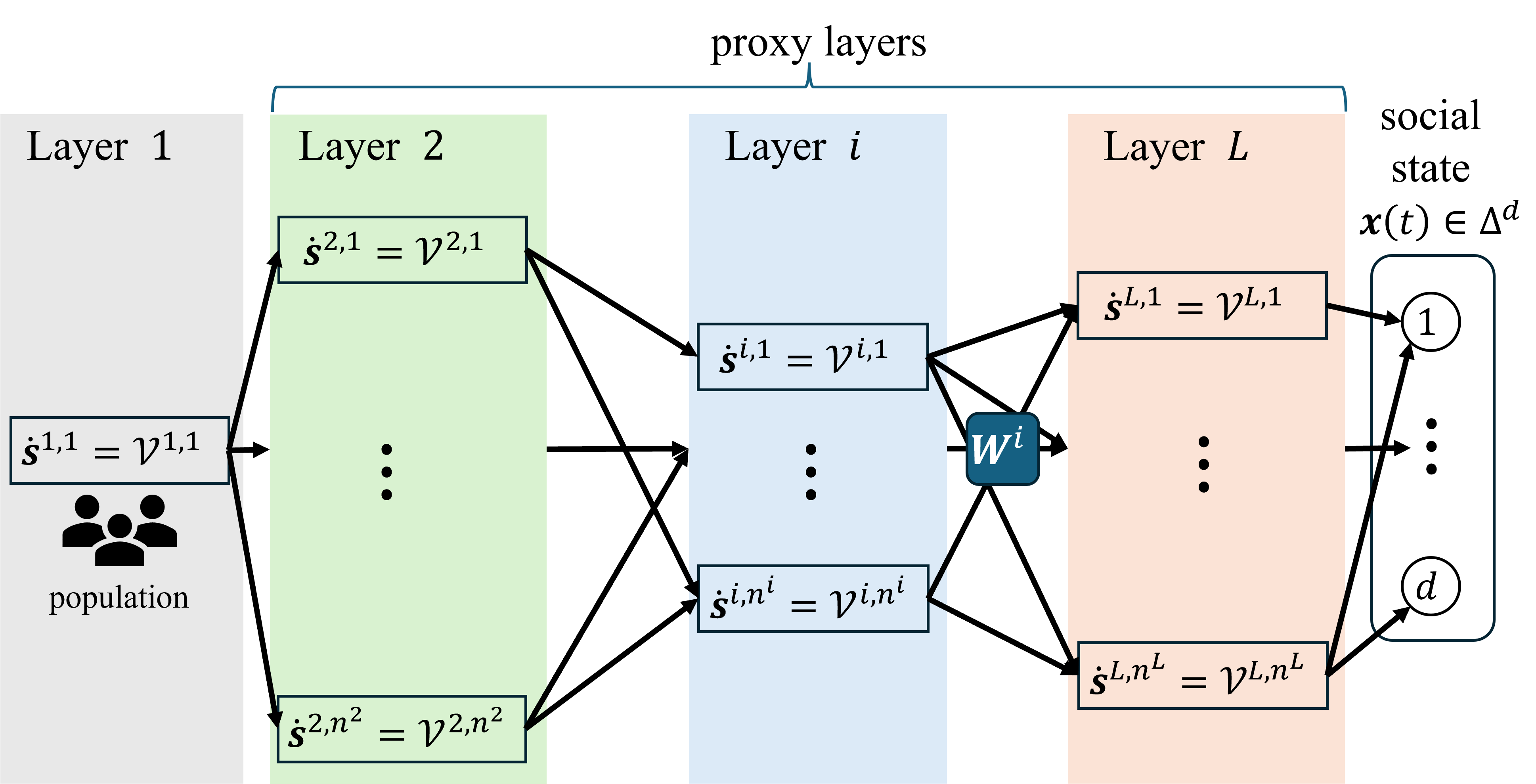}
    \caption{Illustration of the hierarchical framework. The population contributes to the social state via multiple proxy layers.}
    \label{fig:hier_diagram}
\end{figure}
% \captionsetup{belowskip=0pt}

%%%%%%%%%%%%%%%%%%%%%%%%
\subsection{Problem description}
\label{subsec:description}

In this paper, we consider the general case where the \emph{strategy distribution of interest} for each $(i,j)$-decision group is not necessarily the entire simplex but a subset $\mathcal{K}^{i,j}\subseteq\Delta^{d^{i,j}}$. This states that the $(i,j)$-group always makes a decision within $\mathcal{K}^{i,j}$, i.e., $\bm{s}^{i,j}(t)\in \mathcal{K}^{i,j}$. These subsets collectively determine an \emph{admissible set} $\mathcal{K}\subseteq\Delta^d$, where the social state $\bm{x}(t)$ can reside. Denote $[m]$ as the set $\{1,\ldots,m\}$.
\begin{definition}[Admissible set $\mathcal{K}$]\label{def:admissible}
\begin{align}
    \mathcal{K}&=\left\{\Pi_{i=1}^L\bm{W}^i\bm{K}^i: \bm{k}^{i,j}\in\mathcal{K}^{i,j},i\in [L], j\in [n^i]\right\}, \label{eq:K}
\end{align}
where
% \begin{align}
%     \bm{K}^i&=\begin{bmatrix}
%         \bm{k}^{i,1} & & \\
%         & \ddots & \\
%         & & \bm{k}^{i,n^i}
%     \end{bmatrix}\in\mathbb{R}^{o^i\times n^i}. \label{eq:Ki}
% \end{align}
$\bm{K}^i=\mathrm{blkdiag}\left(\bm{k}^{i,1},\ldots,\bm{k}^{i,n^i}\right)\in\mathbb{R}^{o^i\times n^i}$.
\end{definition}
The admissible set $\mathcal{K}$ collects all possible social states generated from different combinations of group decisions. Consequently, (\ref{eq:K}) follows the same structures as (\ref{eq:x}) and (\ref{eq:T}). We provide a concrete example.

\begin{example}[Strategy distribution of interest]
\label{ex:K}
    Consider the example in \Cref{subsec:toy_re}. Suppose that Manager~1 favors the distribution $\bm{s}^{2,1}_*=[0.2,0.2,0.6]^T$ over the three investment targets but allows variations up to $\varepsilon=0.1$, while Manager~2 prefers to invest at least $1/3$ in both investment targets. Then, the strategy distributions of interest are
\begin{subequations}
\begin{align}
    \mathcal{K}^{2,1}&=\left\{\bm{x}\in\Delta^3:\left\|\bm{x}-\bm{s}^{2,1}_*\right\|_2\leq\varepsilon\right\}\quad\text{and}\\
    \mathcal{K}^{2,2}&=\left\{\bm{x}\in\Delta^2:x_1\geq\frac{1}{3}, x_2\geq\frac{1}{3}\right\}.
\end{align}
\end{subequations}
\begin{minipage}{\columnwidth}
\begin{wrapfigure}{R}{.38\columnwidth}
    \centering\includegraphics[width=.37\columnwidth]{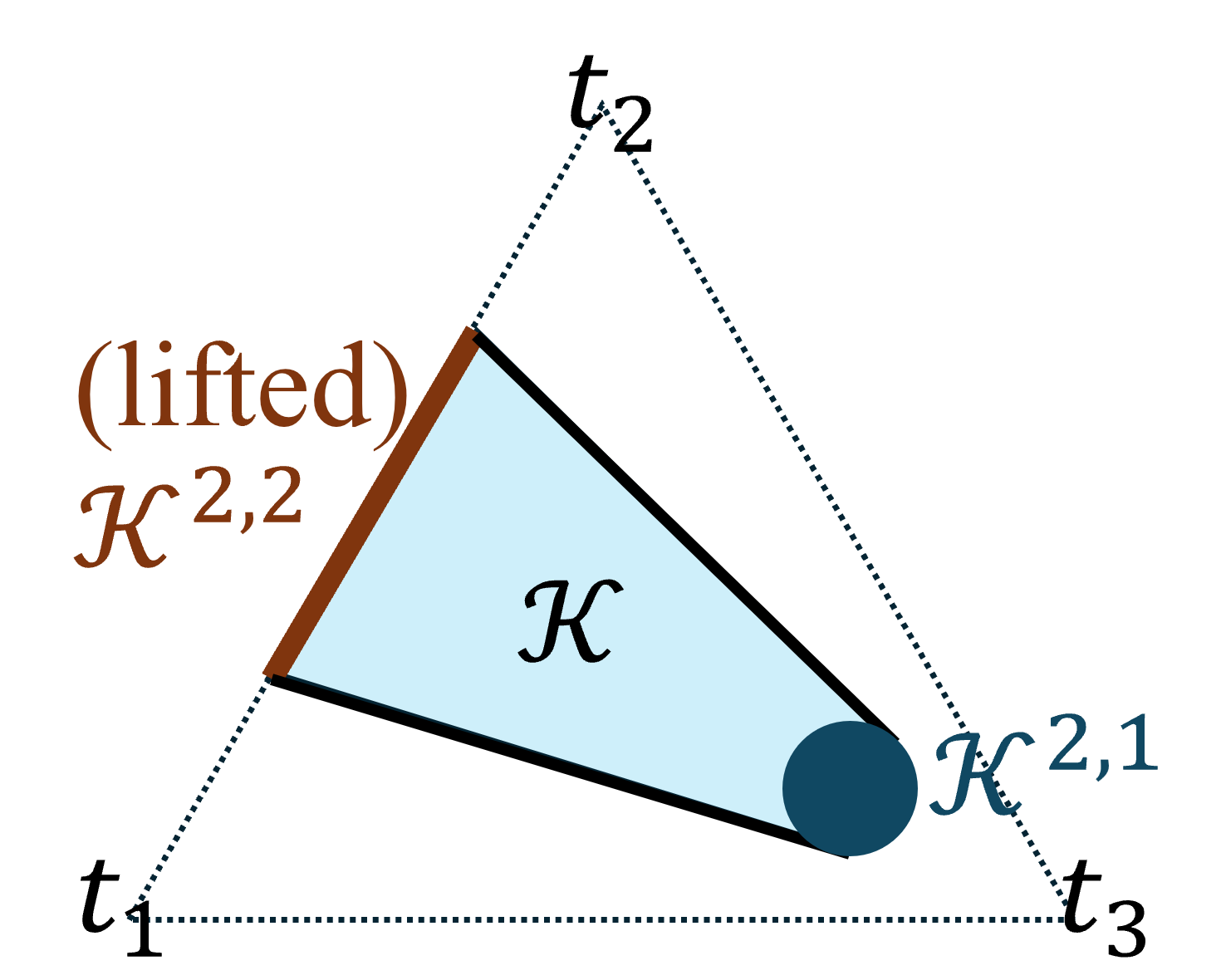}
    \caption{Strategy distributions of interest over 3 investment targets.}
    \label{fig:sets}
\end{wrapfigure}
Elements of $\mathcal{K}^{2,2}$ are two-dimensional, corresponding to the first two final strategies. To visualize $\mathcal{K}^{2,2}$ together with $\mathcal{K}^{2,1}$ in $\Delta^3$, we lift elements in $\mathcal{K}^{2,2}$ to 3D by appending a zero. If investors are free to choose between the two managers, i.e., $\mathcal{K}^{1,1}=\Delta^2$, then the admissible set $\mathcal{K}$ is the convex hull of $\mathcal{K}^{2,1}$ and the lifted $\mathcal{K}^{2,2}$, as shown in \Cref{fig:sets}.
\end{minipage}
\end{example}

% While strategy distributions of interest in proxy layers may depend on the proxies’ behaviors, {\color{blue}we do not assume having the power to restrict the decisions made by individuals in layer 1.}
% in the first layer should have the freedom to choose from all available strategies.
% Thus, in the following, we consider $\mathcal{K}^{1,1}=\Delta^{d^{1,1}}$ for simplicity, as in \Cref{ex:K}.
% We now formally introduce our main problem.

\textbf{Main Problem}: Given the hierarchical framework (\ref{eq:x})-(\ref{eq:dyn}), we aim to determine conditions such that the social state $\bm{x}(t)$ converges to a Nash equilibrium characterized by $F$ within the admissible set $\mathcal{K}$ (\ref{eq:K}).
% Our main problem generalizes the main research in the literature on population games. In classical population games, they consider only one decision group (\ref{eq:dyn}), which represents the entire population and has its dynamics driven by the payoff (\ref{eq:pi_L}). Note that since in this case $\bm{W}^L$ is the identity matrix, (\ref{eq:pi_L}) reduces to $F(\bm{x}(t))$. Their main focus is to determine conditions such that the social state $\bm{x}(t)$ converges to the Nash equilibrium of $F$ within the whole simplex. In this line of research, our hierarchical framework introduces new complexities due to multiple layers and additional challenges from the coupling dynamics among decision groups and their collective influence on $\bm{x}(t)$ via (\ref{eq:x}).
%Unlike classical population games, %our hierarchical 
This framework introduces new complexities due to multiple layers and additional challenges from the coupling dynamics among decision groups and their collective influence on $\bm{x}(t)$ via (\ref{eq:x}).

%At the same time, solving our main problem offers 
As will be illustrated in  \Cref{sec:app}, our solution suggests a novel approach to constraining the social state within a desired set $\mathcal{D}\subseteq\Delta^d$, e.g. to avoid socially inefficient states. If we can design strategy distributions of interest $\mathcal{K}^{i,j}$s such that the resulting admissible set $\mathcal{K}$ satisfies $\mathcal{K}\subseteq\mathcal{D}$, and the dynamics meets our conditions, then $\bm{x}(t)$ evolves and converges within $\mathcal{D}$. Moreover, if we design $\mathcal{K}^{1,1}=\Delta^{d^{1,1}}$, then this approach does not impose direct constraints on individual decisions; instead, the constraint is embedded into the strategy distributions of the proxies.
%, to which the population is agnostic. 
%A concrete example of this approach is provided 

%To address our main problem, we 
We impose mild assumptions derived from the classical population games. Since not all results require all assumptions, we will state explicitly which assumptions apply in each case.

\begin{assumption}\label{ass:convex}
    $\mathcal{K}^{i,j},i\in[L],j\in[n^i]$, are compactly convex.
\end{assumption}

\begin{assumption}\label{ass:F}
    $F$ is continuously differentiable.    
\end{assumption}

\begin{assumption}\label{ass:dyn}
    The learning dynamics (\ref{eq:dyn}) for each $(i,j)$-group ensures that $\mathcal{K}^{i,j}$ is forward invariant. 
\end{assumption}

\begin{assumption}\label{ass:conti}
    $\bm{\pi}^T\mathcal{V}^{i,j}(\bm{s},\bm{\pi})$, $i\in[L],j\in[n^i]$, are Lipschitz continuous w.r.t. $\bm{s}$ and $\bm{\pi}$.
\end{assumption}

When $L=1$ and $\mathcal{K}^{1,1}=\Delta^{d^{1,1}}$, the assumptions reduce to the ones in classical population games.
Here, \Cref{ass:dyn} means that each $(i,j)$-group makes decisions consistently within its strategy distribution of interest $\mathcal{K}^{i,j}$.
% Ensuring this property is not hard.
% When $\mathcal{K}^{i,j}=\Delta^{d^{i,j}}$, as in classical population games, all learning dynamics introduced in \Cref{sec:pre} satisfy this assumption.
Generally, if $\mathcal{K}^{i,j}$ is a convex set, the constrained best response dynamics (\ref{eq:CBR}) guarantees the forward invariance. Moreover, all learning dynamics introduced in \Cref{sec:pre} satisfy \Cref{ass:conti}.

%%%%%%%%%%%%%%%%%%%%%%%%
%%%%%%%%%%%%%%%%%%%%%%%%
%%%%%%%%%%%%%%%%%%%%%%%%
\section{Analysis}
\label{sec:main}
We focus on static models for payoff (\ref{eq:F}) in \Cref{subsec:static} and extend to dynamic models in \Cref{subsec:dyn}.
\subsection{Static Models for Payoff}
\label{subsec:static}
We first characterize the system at its rest points, showing that the corresponding social state is a Nash equilibrium of $F$ over the admissible set $\mathcal{K}$. We then establish conditions under which the system converges to these rest points.

\begin{theorem}[Characterization at rest points] \label{thm:NE}
    Suppose each $\mathcal{V}^{i,j}$ in (\ref{eq:dyn}) is Nash stationary w.r.t. $\mathcal{K}^{i,j}$. Let $\bar{\bm{x}}\in\mathcal{K}$ be an equilibrium social state when the system is at rest. Then, $\bar{\bm{x}}\in \text{NE}_{\mathcal{K}}(F)$.
    % , i.e.,
    % \begin{align}
    %     (\bm{x}-\bar{\bm{x}})^TF(\bar{\bm{x}})\leq0,\quad \forall\bm{x}\in\mathcal{K}. \label{eq:vi}
    % \end{align}
\end{theorem}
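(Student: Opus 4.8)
The plan is to recast the equilibrium condition and then exploit the best-response structure that being ``at rest'' forces at every node of the hierarchy. First I would observe that, by \Cref{def:BR,def:NE}, $\bar{\bm{x}}\in\text{NE}_{\mathcal{K}}(F)$ is equivalent to $\bar{\bm{x}}\in\textit{BR}_{\mathcal{K}}(F(\bar{\bm{x}}))$, i.e. $\bar{\bm{x}}$ maximizes the linear functional $\bm{x}\mapsto\bm{x}^T F(\bar{\bm{x}})$ over $\mathcal{K}$. Write $\bar{\bm{p}}:=F(\bar{\bm{x}})$, let $\bar{\bm{T}}^i$ be the transformation matrices at rest, and let $\bar{\bm{\pi}}^i$ be the back-propagated payoffs of (\ref{eq:pi}) evaluated at rest. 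Since the system is at rest, $\dot{\bm{s}}^{i,j}=\bm{0}$ for every node, so Nash stationarity w.r.t. $\mathcal{K}^{i,j}$ gives $\bar{\bm{s}}^{i,j}\in\textit{BR}_{\mathcal{K}^{i,j}}(\bar{\bm{\pi}}^{i,j})$. Equivalently, each node satisfies the variational inequality $(\bm{k}^{i,j})^T\bar{\bm{\pi}}^{i,j}\le(\bar{\bm{s}}^{i,j})^T\bar{\bm{\pi}}^{i,j}$ for all $\bm{k}^{i,j}\in\mathcal{K}^{i,j}$, which is the only consequence of ``at rest'' I will need.

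Next I would record a telescoping identity expressing the payoff accrued by the whole population as the payoff of the single top group. Using $\bm{m}^{i+1}=\bm{W}^i\bm{T}^i\bm{m}^i$ from (\ref{eq:M}) together with the back-propagation (\ref{eq:pi}), a one-line induction starting from $\bm{x}^T\bm{p}=(\bm{m}^L)^T{\bm{T}^L}^T\bm{\pi}^L$ (which uses (\ref{eq:pi_L}) and (\ref{eq:x})) collapses one factor at a time via $\bm{\pi}^i={\bm{W}^i}^T{\bm{T}^{i+1}}^T\bm{\pi}^{i+1}$, yielding $\bm{x}^T\bm{p}=(\bm{m}^i)^T{\bm{T}^i}^T\bm{\pi}^i$ for every $i$, and in particular $\bar{\bm{x}}^T\bar{\bm{p}}=(\bar{\bm{s}}^{1,1})^T\bar{\bm{\pi}}^{1,1}$ at rest.

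The core step, and the one I expect to be the main obstacle, is to bound $\bm{x}^T\bar{\bm{p}}$ for an arbitrary admissible $\bm{x}\in\mathcal{K}$, written through its own block matrices $\bm{K}^i$ (built from $\bm{k}^{i,j}\in\mathcal{K}^{i,j}$ as in \Cref{def:admissible}) and intermediate mass vectors $\bm{m}^i_{\bm{k}}=\bm{W}^{i-1}\bm{K}^{i-1}\cdots\bm{W}^1\bm{k}^{1,1}$. The delicacy is that the deviating social state is assembled from the $\bm{K}^i$, whereas the payoffs $\bar{\bm{\pi}}^i$ are \emph{frozen} at the equilibrium matrices $\bar{\bm{T}}^i$; this mixed product is not symmetric, so the clean telescoping identity does not apply directly. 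I would instead descend one layer at a time. Starting from $\bm{x}^T\bar{\bm{p}}=(\bm{m}^L_{\bm{k}})^T{\bm{K}^L}^T\bar{\bm{\pi}}^L$, the block-diagonal structure of $\bm{K}^i$ and the stacking (\ref{eq:pi_stack}) give $\bigl({\bm{K}^i}^T\bar{\bm{\pi}}^i\bigr)_j=(\bm{k}^{i,j})^T\bar{\bm{\pi}}^{i,j}$, to which the node-wise best-response inequality applies componentwise. Because $\bm{m}^i_{\bm{k}}$ is elementwise nonnegative -- \Cref{p:nonnegative,p:sum_one} guarantee $\bm{m}^i_{\bm{k}}\in\Delta^{n^i}$ -- the componentwise bounds combine into $(\bm{m}^i_{\bm{k}})^T{\bm{K}^i}^T\bar{\bm{\pi}}^i\le(\bm{m}^i_{\bm{k}})^T{\bar{\bm{T}}^i}^T\bar{\bm{\pi}}^i$, and then $\bm{m}^i_{\bm{k}}=\bm{W}^{i-1}\bm{K}^{i-1}\bm{m}^{i-1}_{\bm{k}}$ with (\ref{eq:pi_i}) re-expresses the right-hand side as $(\bm{m}^{i-1}_{\bm{k}})^T{\bm{K}^{i-1}}^T\bar{\bm{\pi}}^{i-1}$, reproducing the same form one layer down.

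Iterating this descent to the single top group yields $\bm{x}^T\bar{\bm{p}}\le(\bm{k}^{1,1})^T\bar{\bm{\pi}}^{1,1}\le(\bar{\bm{s}}^{1,1})^T\bar{\bm{\pi}}^{1,1}=\bar{\bm{x}}^T\bar{\bm{p}}$, where the final equality is the telescoping identity at rest. Since $\bm{x}\in\mathcal{K}$ was arbitrary, $(\bm{x}-\bar{\bm{x}})^TF(\bar{\bm{x}})\le0$ for all $\bm{x}\in\mathcal{K}$, which is exactly $\bar{\bm{x}}\in\text{NE}_{\mathcal{K}}(F)$. The two facts the argument leans on -- nonnegativity of the mass vectors (to turn per-node inequalities into a scalar one) and the exact alignment of the block structure of $\bm{K}^i$ with the stacking of $\bar{\bm{\pi}}^i$ -- are where I would be most careful in the write-up.
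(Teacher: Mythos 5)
Your proof is correct and takes essentially the same route as the paper's: both reduce ``at rest'' via Nash stationarity to the node-wise variational inequality $(\bm{k}^{i,j}-\bar{\bm{s}}^{i,j})^T\bar{\bm{\pi}}^{i,j}\leq0$, weight it by the nonnegative mass vectors $\bm{m}^i_{\bm{k}}=\bm{W}^{i-1}\bm{K}^{i-1}\cdots\bm{W}^1\bm{k}^{1,1}$ (exactly the paper's nonnegative prefix multipliers $(\bm{W}^{i-1}\bm{K}^{i-1}\ldots\bm{W}^1\bm{K}^1)^T$), and collapse layers through the back-propagation (\ref{eq:pi}) to reach $(\bm{x}-\bar{\bm{x}})^TF(\bar{\bm{x}})\leq0$. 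Your top-down monotone chain peeling one layer at a time is the paper's bottom-up accumulation of the same inequalities read in reverse, with identical intermediate mixed terms such as $(\bm{m}^i_{\bm{k}})^T{\bar{\bm{T}}^i}^T\bar{\bm{\pi}}^i$, so the two write-ups differ only in presentation.
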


\begin{proof}
    We use a bar on a variable to denote its value at a rest point of (\ref{eq:dyn}). Since $\mathcal{V}^{i,j}$s are Nash stationary w.r.t. $\mathcal{K}^{i.j}$s, by definition we have $\left(\bm{s}^{i,j}-\bar{\bm{s}}^{i,j}\right)^T\bar{\bm{\pi}}^{i,j}\leq0$, for $\bm{s}^{i,j}\in\mathcal{K}^{i,j}$.
    % \begin{align}
    %     \left(\bm{s}^{i,j}-\bar{\bm{s}}^{i,j}\right)^T\bar{\bm{\pi}}^{i,j}\leq0,\quad \forall \bm{s}^{i,j}\in\mathcal{K}^{i,j}, \quad \forall i.\label{eq:s_vi}
    % \end{align}
    Using (\ref{eq:T}), (\ref{eq:pi_stack}), and (\ref{eq:K}), we rewrite it compactly as
    \begin{align}
    \left(\bm{K}^i-\bar{\bm{T}}^i\right)^T\bar{\bm{\pi}}^i \preceq \bm{0}, \quad \bm{k}^{i,j}\in\mathcal{K}^{i,j}, i\in[L], j\in[n^i], \label{eq:rec}
    \end{align}
    where $\preceq$ denotes elementwise comparisons. For $i=1$, since $\bm{K}^1=\bm{k}^{1,1}$ is a vector, we have
    \begin{subequations}
    \begin{align}
        &\left(\bm{K}^1-\bar{\bm{T}}^1\right)^T\bar{\bm{\pi}}^1\leq0\\
        % \iff&\left(\bm{K}^1-\bar{\bm{T}}^1\right)^T{\bm{W}^1}^T{\bar{\bm{T}} ^2}\strut^T\bar{\bm{\pi}}^2\leq0\label{eq:expand_pi}\\
        \iff&\left(\bar{\bm{T}}^2\bm{W}^1\bm{K}^1-\bar{\bm{T}}^2\bm{W}^1\bar{\bm{T}}^1\right)^T\bar{\bm{\pi}}^2\leq0,\label{eq:to_add_1}
    \end{align}
    \end{subequations}
    % where (\ref{eq:expand_pi}) follows from (\ref{eq:pi_i}). For $i=2$ in (\ref{eq:rec}), we obtain
    where (\ref{eq:to_add_1}) follows from (\ref{eq:pi_i}). For $i=2$ in (\ref{eq:rec}), we obtain $\left(\bm{K}^2-\bar{\bm{T}}^2\right)^T\bar{\bm{\pi}}^2\preceq\bm{0}$.
    % \begin{align}
    %     \left(\bm{K}^2-\bar{\bm{T}}^2\right)^T\bar{\bm{\pi}}^2\preceq\bm{0}.
    % \end{align}
    By \Cref{p:nonnegative}, $\bm{W}^1\bm{K}^1\succeq\bm{0}$ and then
    \begin{subequations}
    \begin{align}
        &\left(\bm{W}^1\bm{K}^1\right)^T\left(\bm{K}^2-\bar{\bm{T}}^2\right)^T\bar{\bm{\pi}}^2\leq0\\
        \iff&\left(\bm{K}^2\bm{W}^1\bm{K}^1-\bar{\bm{T}}^2\bm{W}^1\bm{K}^1\right)^T\bar{\bm{\pi}}^2\leq0. \label{eq:to_add_2}
    \end{align}
    \end{subequations}
    From (\ref{eq:to_add_1}) and (\ref{eq:to_add_2}), we get, for $\bm{k}^{1,j}\in\mathcal{K}^{1,j}$, $\bm{k}^{2,j}\in\mathcal{K}^{2,j}$,
    \begin{align}
        \left(\bm{K}^2\bm{W}^1\bm{K}^1-\bar{\bm{T}}^2\bm{W}^1\bar{\bm{T}}^1\right)^T\bar{\bm{\pi}}^2\leq0.
    \end{align}

    By recursively expanding $\bar{\bm{\pi}}^i$ using (\ref{eq:pi_i}) and combining with $(\bm{W}^i\bm{K}^i\ldots\bm{W}^1\bm{K}^1)^T(\bm{K}^{i+1}-\bar{\bm{T}}^{i+1})^T\bar{\bm{\pi}}^{i+1}\leq0$, we get
    \begin{align}
        &\Big(\bm{K}^L\bm{W}^{L-1}\bm{K}^{L-1}\ldots\bm{W}^1\bm{K}^1
        \nonumber\\
        &\hspace{20pt}-\bar{\bm{T}}^L\bm{W}^{L-1}\bar{\bm{T}}^{L-1}\ldots\bm{W}^1\bar{\bm{T}}^1\Big)^T{\bm{W}^L}^T\bar{\bm{p}}\leq0,
    \end{align}
    for $\bm{k}^{i,j}\in\mathcal{K}^{i,j}$. By (\ref{eq:x}), (\ref{eq:pi_L}), and (\ref{eq:K}), this simplifies to
    \begin{align}
        (\bm{x}-\bar{\bm{x}})^T\bar{\bm{p}}\leq0,\quad \bm{x}\in\mathcal{K}. \label{eq:hier_NS}
    \end{align}
    Then, the theorem follows because $\bar{\bm{p}}=F(\bar{\bm{x}})$ from (\ref{eq:F}).
\end{proof}

% \begin{remark}[The importance of average payoff]
%     It is not immediately evident that when each decision group pursues its Nash equilibrium, the social state will also converge to a Nash equilibrium within $\mathcal{K}$. The crucial factor enabling this alignment is the average payoff structure introduced in (\ref{eq:pi}). This design has a significant impact: it renders that the decision groups, acting as players, participate in a (weighted) potential game \cite{Monderer1996}, where optimizing its potential function aligns with finding the Nash equilibrium.

%     To illustrate this, consider a fixed time t. Let the $(i,j)$-group, as a player, control the decision variable $\bm{s}^{i,j}(t)$ and have the utility function $u^{i,j}:\mathcal{K}^{i,j}\rightarrow\mathbb{R}$, given by
%     \begin{align}
%         u^{i,j}(\bm{s}^{i,j}(t))={\bm{\pi}^{i,j}}(t)^T\bm{s}^{i,j}(t).
%     \end{align}
%     We can verify that this is a (weighted) potential game by the (weighted) potential function $f:\Pi_{\forall i,j}\ \mathcal{K}^{i,j} \rightarrow\mathbb{R}$, where
%     \begin{align}
%         f\left(\left\{\bm{s}^{i,j}(t),\forall i,j\right\}\right)=\bm{p}(t)^T\bm{W}^L\bm{T}^L(t)\ldots\bm{W}^1\bm{T}^1(t).
%     \end{align}
%     This implies that the decision groups maximizing their utility functions $u^{i,j}$, i.e., finding $\textit{BR}_{\mathcal{K}^{i,j}}(\bm{\pi}^{i,j}(t))$, are aligned with maximizing the potential function $f$, i.e, finding $\textit{NE}_{\mathcal{K}}(F)$.
% \end{remark}

\Cref{thm:NE} does not require \Cref{ass:convex} to hold. However, to show convergence results in the following, convexity is crucial for learning dynamics, e.g., (\ref{eq:CBR}), to satisfy \Cref{ass:dyn}.

\begin{lemma}[Set equivalence, {\cite[Theorem~3.2]{Rockafellar1970}}]\label{lemma:set}
    If $\mathcal{C}$ is a convex set and $\alpha\geq0,\beta\geq0$, then $\alpha \mathcal{C}+\beta \mathcal{C}=(\alpha+\beta)\mathcal{C}$.
\end{lemma}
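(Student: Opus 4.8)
The plan is to establish the identity by proving the two set inclusions separately, and to observe at the outset that only one of them actually uses convexity. The inclusion $(\alpha+\beta)\mathcal{C}\subseteq\alpha\mathcal{C}+\beta\mathcal{C}$ holds for any set $\mathcal{C}$: given an arbitrary element $(\alpha+\beta)\bm{c}$ with $\bm{c}\in\mathcal{C}$, I would simply split it as $\alpha\bm{c}+\beta\bm{c}$, which lies in the Minkowski sum $\alpha\mathcal{C}+\beta\mathcal{C}$ by taking both summands to use the same point $\bm{c}$. This direction is immediate and needs no hypotheses beyond $\alpha,\beta\geq0$.

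The reverse inclusion $\alpha\mathcal{C}+\beta\mathcal{C}\subseteq(\alpha+\beta)\mathcal{C}$ is where convexity is essential. I would take an arbitrary element $\alpha\bm{c}_1+\beta\bm{c}_2$ with $\bm{c}_1,\bm{c}_2\in\mathcal{C}$ and aim to exhibit it as $(\alpha+\beta)\bm{c}$ for some single $\bm{c}\in\mathcal{C}$. First I would dispatch the degenerate case $\alpha+\beta=0$: since $\alpha,\beta\geq0$, this forces $\alpha=\beta=0$, so both sides reduce to $\{\bm{0}\}$ and equality is trivial. Otherwise $\alpha+\beta>0$, and I would factor
\begin{align*}
\alpha\bm{c}_1+\beta\bm{c}_2=(\alpha+\beta)\left(\frac{\alpha}{\alpha+\beta}\bm{c}_1+\frac{\beta}{\alpha+\beta}\bm{c}_2\right).
\end{align*}
The coefficients $\frac{\alpha}{\alpha+\beta}$ and $\frac{\beta}{\alpha+\beta}$ are nonnegative and sum to one, so the bracketed term is a convex combination of $\bm{c}_1$ and $\bm{c}_2$; by convexity of $\mathcal{C}$ it belongs to $\mathcal{C}$. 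Hence $\alpha\bm{c}_1+\beta\bm{c}_2\in(\alpha+\beta)\mathcal{C}$, which gives this inclusion.

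The only genuine content — and the single place convexity enters — is the rewriting of $\alpha\bm{c}_1+\beta\bm{c}_2$ as $(\alpha+\beta)$ times a convex combination, paired with isolating the $\alpha+\beta=0$ edge case so that the division is well-defined. I do not anticipate any real obstacle here; the two inclusions together yield $\alpha\mathcal{C}+\beta\mathcal{C}=(\alpha+\beta)\mathcal{C}$. It is worth noting that the convexity hypothesis is what makes the statement nontrivial: for a general set the left-hand side can be strictly larger, and the factoring step above would fail precisely because the convex combination need not lie in $\mathcal{C}$.
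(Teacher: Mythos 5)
Your proof is correct and is exactly the standard argument: the paper itself offers no proof of this lemma, citing it directly to \cite[Theorem~3.2]{Rockafellar1970}, and your two-inclusion argument (with convexity entering only in the factoring of $\alpha\bm{c}_1+\beta\bm{c}_2$ as $(\alpha+\beta)$ times a convex combination) is essentially Rockafellar's own proof. The only pedantic remark is that your $\alpha+\beta=0$ case tacitly assumes $\mathcal{C}\neq\emptyset$ when asserting both sides equal $\{\bm{0}\}$ (if $\mathcal{C}=\emptyset$ both sides are empty), but equality holds either way, so nothing is broken.
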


\begin{proposition}[Convexity of $\mathcal{K}$]
    If \Cref{ass:convex} holds, then $\mathcal{K}$ is convex.
\end{proposition}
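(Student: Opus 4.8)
The plan is to prove convexity by induction on the layers, peeling off one layer at a time and invoking \Cref{lemma:set} at each step. The obstacle to keep in mind is that, read literally, an element $\Pi_{i=1}^L\bm{W}^i\bm{K}^i$ of $\mathcal{K}$ in (\ref{eq:K}) is a \emph{multilinear} function of the blocks $\bm{k}^{i,j}$, and images of products of convex sets under multilinear maps are generally not convex. The resolution is structural: the block-diagonal form of each $\bm{K}^i$, together with the stochasticity granted by \Cref{p:nonnegative,p:sum_one}, makes each layer act as a convex combination whose weights are themselves a probability vector, which is precisely the situation \Cref{lemma:set} is designed for.

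Concretely, I would first introduce the partial-product sets
\[
\mathcal{A}^i=\left\{\bm{W}^i\bm{K}^i\cdots\bm{W}^1\bm{K}^1:\bm{k}^{i',j}\in\mathcal{K}^{i',j},\ i'\le i,\ j\in[n^{i'}]\right\},
\]
so that $\mathcal{A}^L=\mathcal{K}$, and note (exactly as the text already verifies for $\bm{x}$ via \Cref{p:nonnegative,p:sum_one}) that every element of $\mathcal{A}^{i-1}$ is a probability vector in $\Delta^{n^i}$. The base case is immediate: since $\bm{W}^1=\bm{I}$ and $\bm{K}^1=\bm{k}^{1,1}$, we have $\mathcal{A}^1=\mathcal{K}^{1,1}$, which is convex by \Cref{ass:convex}.

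For the inductive step I would expand one layer. Writing $\bm{a}\in\mathcal{A}^{i-1}$ with entries $a_j$, and letting $\bm{W}^i_{[j]}$ denote the columns of $\bm{W}^i$ aligned with block $j$ of $\bm{K}^i$, the block-diagonal structure gives $\bm{W}^i\bm{K}^i\bm{a}=\sum_{j=1}^{n^i}a_j\,\bm{W}^i_{[j]}\bm{k}^{i,j}$. Setting $\mathcal{B}^{i,j}:=\bm{W}^i_{[j]}\mathcal{K}^{i,j}$, which is convex as a linear image of the convex $\mathcal{K}^{i,j}$, this exhibits $\mathcal{A}^i=\{\sum_j a_j\bm{b}^{i,j}:\bm{a}\in\mathcal{A}^{i-1},\ \bm{b}^{i,j}\in\mathcal{B}^{i,j}\}$; crucially, since the blocks $\bm{k}^{i,j}$ and the lower-layer variables are disjoint, $\bm{a}$ and the $\bm{b}^{i,j}$ range \emph{independently}.

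The key computation then closes the induction. Given two elements $\sum_j a_j\bm{b}^{i,j}$ and $\sum_j a'_j\bm{b}'^{i,j}$ of $\mathcal{A}^i$ and $\lambda\in(0,1)$, I would set $a''_j=\lambda a_j+(1-\lambda)a'_j$, so that $\bm{a}''=\lambda\bm{a}+(1-\lambda)\bm{a}'\in\mathcal{A}^{i-1}$ by the inductive hypothesis. The per-block term $\lambda a_j\bm{b}^{i,j}+(1-\lambda)a'_j\bm{b}'^{i,j}$ lies in $\lambda a_j\mathcal{B}^{i,j}+(1-\lambda)a'_j\mathcal{B}^{i,j}$, which by \Cref{lemma:set} equals $a''_j\mathcal{B}^{i,j}$; hence it equals $a''_j\bm{b}''^{i,j}$ for some $\bm{b}''^{i,j}\in\mathcal{B}^{i,j}$ (a vanishing weight $a''_j=0$ forces $a_j=a'_j=0$ and contributes nothing). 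Summing over $j$ shows the convex combination lies in $\mathcal{A}^i$, so $\mathcal{A}^i$ is convex. Taking $i=L$ gives convexity of $\mathcal{K}=\mathcal{A}^L$. The part I expect to require the most care is exactly this blockwise application of \Cref{lemma:set}: matching the combined weight $a''_j$ with the rescaled block element while handling the zero-weight case, and arguing cleanly that $\bm{a}$ sweeps all of $\mathcal{A}^{i-1}$ independently of the $\bm{b}^{i,j}$.
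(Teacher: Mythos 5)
Your proposal is correct, and although it shares the paper's overall strategy---peeling off one layer at a time and invoking \Cref{lemma:set} at each stage---it executes the key step differently, and in fact more carefully than the paper's own proof. The paper fixes representations $\bm{x}_p=\bm{W}^L\bm{K}^L_p\cdots\bm{W}^1\bm{K}^1_p$ and asserts, citing \Cref{lemma:set}, that $(\alpha\bm{W}^L\bm{K}^L_1\cdots\bm{W}^1)\bm{K}^1_1+(\beta\bm{W}^L\bm{K}^L_2\cdots\bm{W}^1)\bm{K}^1_2=(\alpha\bm{W}^L\bm{K}^L_1\cdots\bm{W}^1+\beta\bm{W}^L\bm{K}^L_2\cdots\bm{W}^1)\tilde{\bm{K}}^1$ for a single $\tilde{\bm{K}}^1\in\mathcal{K}^1$, and then repeats this substitution layer by layer. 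But there the weights multiplying $\bm{K}^1_1$ and $\bm{K}^1_2$ are \emph{matrices}, not the nonnegative scalars of \Cref{lemma:set}, and the asserted equality can fail for the given (arbitrary) representations: take $L=2$, $d=2$, two layer-$2$ groups, $\bm{W}^1=\bm{I}_2$, $\bm{W}^2=[\bm{I}_2,\bm{I}_2]$, all $\mathcal{K}^{i,j}=\Delta^2$; with $\bm{K}^2_1=\mathrm{blkdiag}\left((1,0)^T,(0,1)^T\right)$, $\bm{K}^1_1=(1,0)^T$ and $\bm{K}^2_2=\mathrm{blkdiag}\left((0,1)^T,(1,0)^T\right)$, $\bm{K}^1_2=(0,1)^T$, and $\alpha=\beta=1/2$, the left-hand side equals $(1,0)^T$ while the right-hand side equals $(1/2,1/2)^T$ for \emph{every} $\tilde{\bm{K}}^1\in\Delta^2$ (re-choosing representations to dodge this would be circular, since nothing yet guarantees a suitable common representation exists). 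Your blockwise decomposition is exactly what repairs this: because $\bm{a}\in\mathcal{A}^{i-1}\subseteq\Delta^{n^i}$ by \Cref{p:nonnegative,p:sum_one}, the weights $\lambda a_j$ and $(1-\lambda)a'_j$ fed into \Cref{lemma:set} are scalars, the sets $\mathcal{B}^{i,j}=\bm{W}^i_{[j]}\mathcal{K}^{i,j}$ are convex as linear images of the convex $\mathcal{K}^{i,j}$, and you explicitly handle the zero-weight blocks and the independence of $\bm{a}$ from the $\bm{b}^{i,j}$ (the only unstated triviality is nonemptiness of $\mathcal{K}^{i,j}$, implicit throughout the paper). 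In short: the paper's whole-layer substitution buys brevity but rests on a matrix-weighted version of \Cref{lemma:set} that is false in general, whereas your bottom-up induction on the partial-product sets $\mathcal{A}^i$ reduces the multilinear structure to genuine scalar convex combinations per group, yielding a rigorous proof of the proposition.
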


\begin{proof}
    % Given two distinct points $\bm{x}_1$ and $\bm{x}_2$ in $\mathcal{K}$ (\ref{eq:K}), there exists $\bm{K}^i_1$ and $\bm{K}^i_2$ for all $i=1,\ldots,L$ such that
    Given two distinct points $\bm{x}_p\in\mathcal{K}$, for $p=1,2$, there exists $\bm{K}^i_p$, for $i\in[L]$, such that $\bm{x}_p=\bm{W}^L\bm{K}^L_p\ldots\bm{W}^1\bm{K}^1_p\in\mathcal{K}$.
    % \begin{align}
    %     % \bm{x}_1&=\bm{W}^L\bm{K}^L_1\ldots\bm{W}^1\bm{K}^1_1\in\mathcal{K}\\
    %     \bm{x}_p&=\bm{W}^L\bm{K}^L_p\ldots\bm{W}^1\bm{K}^1_p\in\mathcal{K}.
    % \end{align}
    Note that $\bm{K}^i_1$ and $\bm{K}^i_2$ belong to the same convex set, denoted by $\mathcal{K}^i$. Then, for each $\alpha,\beta\geq0,\alpha+\beta=1$, we have $\alpha\bm{x}_1+\beta\bm{x}_2$
    \begin{subequations}
    \begin{align}
        &=(\alpha\bm{W}^L\bm{K}^L_1\ldots\bm{W}^1)\bm{K}^1_1+(\beta\bm{W}^L\bm{K}^L_2\ldots\bm{W}^1)\bm{K}^1_2\label{eq:lemma}\\
        &=(\alpha\bm{W}^L\bm{K}^L_1\ldots\bm{W}^1+\beta\bm{W}^L\bm{K}^L_2\ldots\bm{W}^1)\tilde{\bm{K}}^1\\
        &=(\alpha\bm{W}^L\bm{K}^L_1\ldots\bm{K}^2_1+\beta\bm{W}^L\bm{K}^L_2\ldots\bm{K}^2_2)\bm{W}^1\tilde{\bm{K}}^1\label{eq:lemma_rec}\\
        &=\bm{W}^L\tilde{\bm{K}}^L\ldots\bm{W}^1\tilde{\bm{K}}^1\in\mathcal{K},
    \end{align}
    \end{subequations}
    where $\tilde{\bm{K}}^i\in\mathcal{K}^i$, which appears since we apply \Cref{lemma:set} to (\ref{eq:lemma}) and recursively to (\ref{eq:lemma_rec}).
\end{proof}

\Cref{thm:NE} states that 
%once the system is at rest, the social state lies 
the rest points lie in $\textit{NE}_{\mathcal{K}}(F)$. 
%Therefore, we determine the conditions that settle the system in the following.
In the following, we give conditions for convergence to these points.

\begin{definition}[Potential game]
    A payoff function $F:\Delta^d\rightarrow\mathbb{R}^d$ is a potential game if there exists a $C^1$ potential function $f:\mathbb{R}^d\rightarrow \mathbb{R}$, such that $\nabla_{\bm{x}} f(\bm{x})=F(\bm{x})$, for $\bm{x}\in\Delta^d$.
\end{definition}

While \Cref{thm:NE} describes the analog of the Nash stationarity property for the hierarchical structure, the following lemma states the analog of the positive correlation property and is used to prove the convergence results in \Cref{thm:potential}.

\begin{lemma}\label{lemma:pc}
    If each $\mathcal{V}^{i,j}$ is positively correlated, then $\bm{p}(t)^T\dot{\bm{x}}(t)\geq0$ for all $t$.
\end{lemma}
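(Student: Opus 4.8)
The plan is to differentiate the product expression \eqref{eq:x} for $\bm{x}(t)$, recognize the matrix factor sitting to the left of each differentiated block as a back-propagated payoff $\bm{\pi}^k(t)$ and the factor to its right as the input distribution $\bm{m}^k(t)$, and thereby reduce the scalar $\bm{p}(t)^T\dot{\bm{x}}(t)$ to a nonnegative combination of the per-group quantities $\bm{\pi}^{k,j}(t)^T\mathcal{V}^{k,j}$, each nonnegative by positive correlation.

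First I would apply the product rule to $\bm{x}(t)=\bm{W}^L\bm{T}^L(t)\cdots\bm{W}^1\bm{T}^1(t)$. Since the aggregation matrices $\bm{W}^i$ are constant, only the $\bm{T}^k(t)$ factors are differentiated, giving $\dot{\bm{x}}(t)=\sum_{k=1}^L \bm{A}^k(t)\,\dot{\bm{T}}^k(t)\,\bm{m}^k(t)$, where $\bm{A}^k(t)=\bm{W}^L\bm{T}^L(t)\cdots\bm{W}^{k+1}\bm{T}^{k+1}(t)\bm{W}^k$ collects the factors to the left of $\dot{\bm{T}}^k$, and the factors to its right are exactly $\bm{W}^{k-1}\bm{T}^{k-1}(t)\cdots\bm{W}^1\bm{T}^1(t)=\bm{m}^k(t)$ by \eqref{eq:M} (recall $\bm{T}^1=\bm{s}^{1,1}$).

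Next I would identify $\bm{p}(t)^T\bm{A}^k(t)=\bm{\pi}^k(t)^T$ by unrolling the payoff recursion \eqref{eq:pi}: from \eqref{eq:pi_L} we have $\bm{\pi}^L(t)^T=\bm{p}(t)^T\bm{W}^L$, and each application of \eqref{eq:pi_i} appends a factor $\bm{T}^{k+1}(t)\bm{W}^k$ on the right, so an induction on $k$ from $L$ down to $1$ yields $\bm{\pi}^k(t)^T=\bm{p}(t)^T\bm{A}^k(t)$. Substituting this into the previous display gives $\bm{p}(t)^T\dot{\bm{x}}(t)=\sum_{k=1}^L \bm{\pi}^k(t)^T\dot{\bm{T}}^k(t)\bm{m}^k(t)$. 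Finally, using the block-diagonal form \eqref{eq:T} of $\bm{T}^k$, so that $\dot{\bm{T}}^k$ is block diagonal with blocks $\dot{\bm{s}}^{k,j}=\mathcal{V}^{k,j}(\bm{s}^{k,j},\bm{\pi}^{k,j})$, together with the stacking \eqref{eq:pi_stack} of $\bm{\pi}^k$, each summand expands as $\bm{\pi}^k(t)^T\dot{\bm{T}}^k(t)\bm{m}^k(t)=\sum_{j=1}^{n^k} m^k_j(t)\,\bm{\pi}^{k,j}(t)^T\mathcal{V}^{k,j}(\bm{s}^{k,j}(t),\bm{\pi}^{k,j}(t))$. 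Each coefficient $m^k_j(t)\geq0$ since $\bm{m}^k(t)\in\Delta^{n^k}$, and each factor $\bm{\pi}^{k,j}(t)^T\mathcal{V}^{k,j}\geq0$ by positive correlation (strictly positive when $\mathcal{V}^{k,j}\neq\bm{0}$, and zero otherwise). Hence every term is nonnegative and $\bm{p}(t)^T\dot{\bm{x}}(t)\geq0$.

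The main obstacle is the bookkeeping rather than any deep estimate: I must track the matrix product carefully through differentiation so that the left factor of the $k$-th summand is recognized as the back-propagated payoff $\bm{\pi}^k$ and the right factor as $\bm{m}^k$, and then verify that the block-diagonal multiplication separates cleanly into the per-group inner products $m^k_j\,\bm{\pi}^{k,j\,T}\mathcal{V}^{k,j}$. Once this algebraic identity is established, the sign conclusion follows immediately from positive correlation and the nonnegativity of the entries of $\bm{m}^k$.
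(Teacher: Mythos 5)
Your proposal is correct and follows essentially the same route as the paper's proof: the paper likewise applies the product rule to $\bm{x}(t)=\prod_{i=1}^L\bm{W}^i\bm{T}^i(t)$, folds the left factors into $\bm{\pi}^i(t)^T$ via the recursion (\ref{eq:pi}), and reduces $\bm{p}(t)^T\dot{\bm{x}}(t)$ to $\sum_{i=1}^L\sum_{j=1}^{n^i} m_j^i(t)\,\mathcal{V}^{i,j}\bigl(\bm{s}^{i,j}(t),\bm{\pi}^{i,j}(t)\bigr)^T\bm{\pi}^{i,j}(t)\geq 0$ using positive correlation and nonnegativity of $\bm{m}^i(t)$. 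Your explicit induction showing $\bm{p}(t)^T\bm{A}^k(t)=\bm{\pi}^k(t)^T$ merely spells out the step the paper attributes to (\ref{eq:pi}), so there is no substantive difference.
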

\begin{proof}
    \begin{subequations} \label{eq:hier_pc}
    \begin{align}
        \bm{p}&(t)^T\dot{\bm{x}}(t)=\bm{p}(t)^T\Big(\bm{W}^L\dot{\bm{T}}^L(t)\bm{m}^L(t)+\ldots\nonumber\\
        &\hspace{70pt}+\bm{W}^L\bm{T}^L(t)\ldots\bm{W}^1\dot{\bm{T}}^1(t)\Big) \label{eq:by_potential}\\
        &={\bm{\pi}^L(t)}^T\dot{\bm{T}}^L(t)\bm{m}^L(t)+\ldots+{\bm{\pi}^1(t)}^T\dot{\bm{T}}^1(t)\label{eq:by_pi}\\
        &=\sum_{i=1}^L\sum_{j=1}^{n^i} m_j^i(t)\mathcal{V}^{i,j}\left(\bm{s}^{i,j}(t),\bm{\pi}^{i,j}(t)\right)^T\bm{\pi}^{i,j}(t)\geq 0.\label{eq:for_LaSalle}
        % &\geq 0,\quad \forall \bm{x}(t)\in\mathcal{K}. \label{eq:by_pc}
    \end{align}
    \end{subequations}
    Here, (\ref{eq:by_pi}) follows from (\ref{eq:pi}), while the inequality in (\ref{eq:for_LaSalle}) holds 
    % since ${\dot{\bm{s}}^{i,j}(t)}^T\bm{\pi}^{i,j}(t)\geq0$ and $M_j^i(t)\geq0$ follow from 
    by positive correlation and \Cref{p:nonnegative}.
\end{proof}

\begin{theorem}[Convergence for potential game $F$]\label{thm:potential}
    Let $F$ be a potential game, and let \Cref{ass:convex,ass:F,ass:dyn,ass:conti} hold. Suppose that each $\mathcal{V}^{i,j}$ in (\ref{eq:dyn}) is Nash stationary w.r.t. $\mathcal{K}^{i,j}$ and positively correlated. Then, 
    %the social state 
    $\bm{x}(t)$ asymptotically approaches $\text{NE}_{\mathcal{K}}(F)$.
\end{theorem}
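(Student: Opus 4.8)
The plan is to prove \Cref{thm:potential} via a LaSalle-type invariance argument, using the potential function $f$ as a Lyapunov candidate and the positive-correlation inequality from \Cref{lemma:pc} to show monotonicity along trajectories. First I would define the candidate $V(t)=f(\bm{x}^*)-f(\bm{x}(t))$ for an appropriate reference, or more directly study the evolution of $f(\bm{x}(t))$ itself. Since $F=\nabla f$ and $\bm{p}(t)=F(\bm{x}(t))$ by \eqref{eq:F}, the chain rule gives $\tfrac{d}{dt}f(\bm{x}(t))=\nabla f(\bm{x}(t))^T\dot{\bm{x}}(t)=\bm{p}(t)^T\dot{\bm{x}}(t)\geq0$ by \Cref{lemma:pc}. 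Thus $f(\bm{x}(t))$ is nondecreasing along trajectories; combined with continuity of $f$ (from \Cref{ass:F}) and compactness of the admissible region, $f(\bm{x}(t))$ is bounded above and converges.

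Next I would set up the state space for LaSalle's theorem. The true dynamical state is the collection $(\bm{s}^{i,j}(t))_{i,j}$ evolving on the compact forward-invariant set $\prod_{i,j}\mathcal{K}^{i,j}$, which is guaranteed compact and invariant by \Cref{ass:convex} and \Cref{ass:dyn}. The maps $\mathcal{V}^{i,j}$ and the payoff back-propagation \eqref{eq:pi} are continuous (using \Cref{ass:F} and \Cref{ass:conti}), so solutions exist and remain in this compact set for all $t\geq0$. LaSalle's invariance principle then applies with the Lyapunov function $\tilde{V}=-f\circ(\text{map to }\bm{x})$, whose derivative $\dot{\tilde{V}}=-\bm{p}^T\dot{\bm{x}}\leq0$ vanishes exactly when the sum in \eqref{eq:for_LaSalle} is zero. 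Every trajectory therefore approaches the largest invariant set contained in $\{\dot{\tilde{V}}=0\}$.

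The key step is to identify this invariant set with the rest points and hence, via \Cref{thm:NE}, with $\text{NE}_{\mathcal{K}}(F)$. Because each term $m_j^i(t)\,\mathcal{V}^{i,j}(\bm{s}^{i,j},\bm{\pi}^{i,j})^T\bm{\pi}^{i,j}$ in \eqref{eq:for_LaSalle} is nonnegative, $\bm{p}^T\dot{\bm{x}}=0$ forces each term to vanish. On the largest invariant set, $\bm{x}(t)$ is constant, so the social state, all $\bm{m}^i(t)$, and the payoffs $\bm{\pi}^{i,j}(t)$ are stationary. For groups with $m_j^i>0$ this yields $\mathcal{V}^{i,j}(\bm{s}^{i,j},\bm{\pi}^{i,j})^T\bm{\pi}^{i,j}=0$, which by positive correlation forces $\mathcal{V}^{i,j}=\bm{0}$, i.e.\ these groups are at rest and, by Nash stationarity, $\bm{s}^{i,j}\in\textit{BR}_{\mathcal{K}^{i,j}}(\bm{\pi}^{i,j})$. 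I would then argue that the resulting constant social state $\bar{\bm{x}}$ satisfies the hypotheses of \Cref{thm:NE}, so $\bar{\bm{x}}\in\text{NE}_{\mathcal{K}}(F)$, and conclude that $\bm{x}(t)$ asymptotically approaches $\text{NE}_{\mathcal{K}}(F)$.

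The main obstacle I anticipate is the handling of groups with $m_j^i=0$: these are decision groups whose upstream mass is zero, so their own payoff-alignment term vanishes trivially regardless of whether $\mathcal{V}^{i,j}=\bm{0}$, meaning such a group need not be at a rest point. I would argue that a group with zero incoming mass has no effect on $\bm{x}(t)$ through \eqref{eq:x} (its state variable is multiplied by a zero factor in the product), so whatever these groups do cannot obstruct convergence of the social state; the characterization in \Cref{thm:NE} concerns only $\bar{\bm{x}}$, and the construction in \eqref{eq:rec}–\eqref{eq:hier_NS} only requires the Nash-stationarity inequality for each group, which holds whenever the group is at rest. A secondary technical point is verifying that $\text{NE}_{\mathcal{K}}(F)$ as a set is closed so that ``asymptotically approaches'' is well-posed, which follows from continuity of $F$ and closedness of $\mathcal{K}$.
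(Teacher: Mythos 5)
Your proposal follows essentially the same route as the paper's proof: the potential $f$ as a Lyapunov function (the paper uses $V(\bm{x})=f^*-f(\bm{x})$ with $f^*=\max_{\bm{x}\in\mathcal{K}}f(\bm{x})$), \Cref{lemma:pc} for monotonicity, an invariance-type argument yielding the per-group dichotomy (group at rest, hence at a best response within $\mathcal{K}^{i,j}$, versus zero upstream mass $m^i_j$, which leaves $\bm{x}(t)$ unaffected), and finally \Cref{thm:NE} to identify the limit with $\text{NE}_{\mathcal{K}}(F)$. The only cosmetic difference is that the paper obtains $\bm{p}(t)^T\dot{\bm{x}}(t)\rightarrow0$ directly from (\ref{eq:for_LaSalle}) and the Lipschitz continuity in \Cref{ass:conti} (a Barbalat-type step) rather than formally invoking LaSalle on the product state space $\prod_{i,j}\mathcal{K}^{i,j}$, and it handles the zero-mass groups with the same informal ``they do not affect $\bm{x}(t)$'' argument that you give.
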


\begin{proof}
    Since $F$ is a potential game, there exists a potential function $f:\mathbb{R}^d\rightarrow\mathbb{R}$. Define $f^*=\max_{\bm{x}\in\mathcal{K}}f(\bm{x})$, which exists since $\mathcal{K}$ is compact. We introduce the Lyapunov function $V:\mathcal{K}\rightarrow\mathbb{R_+}$, given by $V(\bm{x})=f^*-f(\bm{x})$. By \Cref{lemma:pc},
    \begin{align}
        \frac{d}{dt}&V(\bm{x}(t))=-\nabla_{\bm{x}}f(\bm{x}(t))^T\dot{\bm{x}}(t)=-\bm{p}(t)^T\dot{\bm{x}}(t)\leq0. \label{eq:vdot}
    \end{align}
    By (\ref{eq:for_LaSalle}) and \Cref{ass:conti}, we have $\bm{p}(t)^T\dot{\bm{x}}(t)\rightarrow0$.
    % To apply LaSalle's invariance principle, we note that $\mathcal{K}^{i,j}$ is forward invariant under (\ref{eq:dyn}). The largest invariant set is {\color{red}$\mathcal{M}=\left\{\bm{s}^{i,j},\forall i,j:\mathcal{V}^{i,j}=0,\forall i,j\right\}$}. To verify this, observe that when (\ref{eq:for_LaSalle}) is 0, each summand must be 0, resulting in ${\dot{\bm{s}}^{i,j}(t)}^T\bm{\pi}^{i,j}(t)=0$ or $M_j^i(t)=0$.
    Then, by continuity, we have either $\mathcal{V}^{i,j}\left(\bm{s}^{i,j}(t),\bm{\pi}^{i,j}(t)\right)^T\bm{\pi}^{i,j}(t)\rightarrow\bm{0}$ or $m_j^i(t)\rightarrow0$. For the former, by positive correlation, we have $\bm{s}^{i,j}(t)\rightarrow\textit{BR}_{\mathcal{K}^{i,j}}\left(\bm{\pi}^{i,j}(t)\right)$. For the latter, it means no one selects the $(i,j)$-group, rendering $\bm{x}(t)$ unaffected by $\bm{s}^{i,j}(t)$. Therefore, by \Cref{thm:NE}, we conclude $\bm{x}(t)\rightarrow \textit{NE}_{\mathcal{K}}(F)$.
\end{proof}

% If we view the hierarchical framework as a system mapping from the input $\bm{p}(t)$ to the output $\bm{x}(t)$ for all $t$, then (\ref{eq:hier_NS}) describes one direction of the Nash stationary property: if the system is at rest, then $\bar{\bm{x}}\in \textit{BR}_{\mathcal{K}}(\bar{\bm{p}})$. Meanwhile, (\ref{eq:hier_pc}) \emph{partially} states the positive correlation property: $\bm{p}(t)^T\dot{\bm{x}}(t)\geq0$. We say \emph{partially} since $\bm{p}(t)^T\dot{\bm{x}}(t)=0$ does not necessarily imply that the system is at rest. The invariance principle as a remedy shows asymptotical stationarity in the proof of \Cref{thm:potential}. In the following, we show that the positive correlation property allows for convergence results in a broader class of payoff beyond potential games.

\subsection{Dynamic Models for Payoff}
\label{subsec:dyn}
Next, we consider scenarios where the payoff, instead of being given by (\ref{eq:F}), follows the dynamical model:
% \begin{subequations}\label{eq:PDM}
% \begin{align}
%     \dot{\bm{q}}(t)&=g(\bm{q}(t),\bm{x}(t))\\
%     \bm{p}(t)&=h(\bm{q}(t),\bm{x}(t)),\quad t\geq0,
% \end{align}
% \end{subequations}
\begin{align}
    \dot{\bm{q}}(t)=g(\bm{q}(t),\bm{x}(t)),\quad \bm{p}(t)=h(\bm{q}(t),\bm{x}(t)),\quad t\geq0, \label{eq:PDM}
\end{align}
where $\bm{q}(t)\in\mathbb{R}^a$ and the functions $g:\mathbb{R}^a\times\Delta^d\rightarrow\mathbb{R}^a$ and $h:\mathbb{R}^a\times\Delta^d\rightarrow\mathbb{R}^d$ are Lipschitz continuous.
% We denote the social state trajectory as $\bm{x}:[0,\infty)\rightarrow\Delta^d$ where $\bm{x}(t)$ is the social state at time $t$. Similarly, we denote the payoff trajectory as $\bm{p}:[0,\infty)\rightarrow\mathbb{R}^d$.
% and the state $\bm{q}$-trajectory as $\bm{q}:[0,\infty)\rightarrow\mathbb{R}^a$.
% and denote the set of admissible social state trajectories as
% \begin{align}
%     \mathcal{X}=\left\{\left.\bm{x}:[0,\infty)\rightarrow\Delta^d \right| \bm{x}\text{ is Lipschitz continuous}\right\}.
% \end{align}

% \begin{assumption}[Recovery of a static model $F_s$]\label{ass:static}
%     There exists a static model $\bm{p}(t)=F_s(\bm{x}(t))$, $t\geq0$, captured by the $C^1$ payoff function $F_s:\Delta^d\rightarrow\mathbb{R}^d$, such that for all $\bm{x}\in\mathcal{X}$,
%     \begin{align}
%         \lim_{t\rightarrow\infty}\|\dot{\bm{x}}(t)\|=0 \implies \lim_{t\rightarrow\infty}\|F_s(\bm{x}(t))-\bm{p}(\bm{x}(t))\|=0.
%     \end{align}
% \end{assumption}

% \begin{assumption}[Boundedness of $\bm{q}$]\label{ass:bounded}
%     \begin{align}
%         \|\bm{q}\|_\infty=\sup_{t\geq0}\|\bm{q}(t)\|_\infty<\infty, \quad \forall \bm{x}\in\mathcal{X}.
%     \end{align}
% \end{assumption}
% \Cref{ass:bounded} also guarantees the boundedness of $\bm{p}$ due to the compactness of $\bm{x}$ and the Lipschitz continuity of $g$ and $h$.

\begin{definition}[Payoff Dynamics Model, PDM \cite{Park2019CDC}]
A dynamic model (\ref{eq:PDM}) is called a Payoff Dynamics Model (PDM) if it is bounded-input bounded-output stable and recovers the static model (\ref{eq:F}) in steady state, i.e.,
% \begin{align}
%     g(\bm{q},\bm{x})=0\implies h(\bm{q},\bm{x})=F(\bm{x}). \label{eq:recover}
% \end{align}
\begin{align}
    \lim_{t\rightarrow\infty}\|\dot{\bm{x}}(t)\|=0 \implies \lim_{t\rightarrow\infty} \bm{p}(t)=F(\bm{x}(t)). \label{eq:recover}
\end{align}
\end{definition}

Motivated by \cite{Martins2024}, we extend our results from \Cref{thm:potential} to support PDMs that satisfy the following input-output property.

\begin{definition}[Counterclockwise dissipativity, CCW \cite{Angeli2006}]
    A dynamic model (\ref{eq:PDM}) is Counterclockwise dissipative (CCW) if its input-output relation satisfies
    \begin{align}
        \alpha_c:=\liminf_{T\rightarrow+\infty} \int_0^T \dot{\bm{p}}(t)^T\bm{x}(t)\ dt > -\infty. \label{eq:CCW}
    \end{align}
\end{definition}

A potential game $F$ is equivalent to a \emph{memoryless} CCW PDM \cite{Martins2024}. Thus, the following theorem generalizes \Cref{thm:potential} in the sense that the convergence results hold for any CCW PDM. More examples of CCW dynamics can be found in \cite{Martins2024}.

\begin{theorem}[Convergence for CCW PDM]\label{thm:PDM}
    Let the payoff $\bm{p}(t)$ be given by a CCW PDM, and let \Cref{ass:convex,ass:F,ass:dyn,ass:conti} hold. Suppose that each $\mathcal{V}^{i,j}$ in (\ref{eq:dyn}) is Nash stationary w.r.t. $\mathcal{K}^{i,j}$ and positively correlated. Then, $\bm{x}(t)\rightarrow\textit{BR}_{\mathcal{K}}\left(\bm{p}(t)\right)$.
\end{theorem}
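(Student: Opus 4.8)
The plan is to replicate the structure of the proof of \Cref{thm:potential}, but to replace its Lyapunov argument---which relied on the existence of a potential function $f$---with a dissipativity argument driven by the CCW inequality (\ref{eq:CCW}). The first observation is that \Cref{lemma:pc} is insensitive to how the payoff $\bm{p}(t)$ is generated: its proof only expands $\dot{\bm{x}}(t)$ through (\ref{eq:x}) and (\ref{eq:pi}) and invokes positive correlation together with \Cref{p:nonnegative}. Hence, under the stated hypotheses, we still have $\bm{p}(t)^T\dot{\bm{x}}(t)\geq0$ for all $t$, with the pointwise decomposition (\ref{eq:for_LaSalle}) into nonnegative summands available as well.

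The core step is to show that $\bm{p}(t)^T\dot{\bm{x}}(t)\to0$. First I would integrate by parts to write
\begin{align}
    \int_0^T \bm{p}(t)^T\dot{\bm{x}}(t)\,dt = \bm{p}(T)^T\bm{x}(T) - \bm{p}(0)^T\bm{x}(0) - \int_0^T \dot{\bm{p}}(t)^T\bm{x}(t)\,dt. \nonumber
\end{align}
By \Cref{lemma:pc} the left-hand side is nondecreasing in $T$. For the right-hand side, the boundary term $\bm{p}(T)^T\bm{x}(T)$ is bounded since the PDM is BIBO stable (so $\bm{p}$ is bounded) and $\bm{x}(T)\in\Delta^d$ lies in a compact set, while the CCW property (\ref{eq:CCW}) bounds $\int_0^T\dot{\bm{p}}(t)^T\bm{x}(t)\,dt$ from below by $\alpha_c>-\infty$. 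Thus $\int_0^T\bm{p}(t)^T\dot{\bm{x}}(t)\,dt$ is monotone and bounded above, hence converges to a finite limit. Invoking the Lipschitz regularity of the trajectories and of $\bm{\pi}^T\mathcal{V}^{i,j}$ from \Cref{ass:conti}, the integrand $\bm{p}(t)^T\dot{\bm{x}}(t)$ is uniformly continuous, so Barbalat's lemma yields $\bm{p}(t)^T\dot{\bm{x}}(t)\to0$. This parallels the role played by the convergence of the Lyapunov function in \Cref{thm:potential}.

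From here the argument mirrors the end of \Cref{thm:potential}. Since (\ref{eq:for_LaSalle}) expresses $\bm{p}^T\dot{\bm{x}}$ as a sum of nonnegative terms $m_j^i(t)\,\mathcal{V}^{i,j}(\cdot)^T\bm{\pi}^{i,j}(t)$, each term must vanish asymptotically; for each $(i,j)$ either $m_j^i(t)\to0$, in which case the $(i,j)$-group no longer influences $\bm{x}(t)$ through (\ref{eq:x}), or $\mathcal{V}^{i,j}(\cdot)^T\bm{\pi}^{i,j}(t)\to0$, which by positive correlation and continuity forces $\bm{s}^{i,j}(t)\to\textit{BR}_{\mathcal{K}^{i,j}}(\bm{\pi}^{i,j}(t))$. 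Feeding these group-level best-response conditions into the recursive aggregation of \Cref{thm:NE}---now applied with the instantaneous payoff $\bm{p}(t)$ rather than at a rest point---collapses them, in the limit, to $(\bm{x}-\bm{x}(t))^T\bm{p}(t)\leq0$ for all $\bm{x}\in\mathcal{K}$, which is exactly $\bm{x}(t)\to\textit{BR}_{\mathcal{K}}(\bm{p}(t))$.

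I expect the main obstacle to be the dissipativity bookkeeping in the core step: making the integration-by-parts and Barbalat argument rigorous requires uniform continuity of $\bm{p}^T\dot{\bm{x}}$, which in turn needs uniform boundedness of $\dot{\bm{x}}$, $\dot{\bm{p}}$, and Lipschitz dependence of the EDMs---details that must be assembled carefully from BIBO stability and \Cref{ass:conti}. A secondary subtlety is that the target set $\textit{BR}_{\mathcal{K}}(\bm{p}(t))$ is time-varying, so the aggregation of \Cref{thm:NE} must be interpreted as an asymptotic (set-distance) statement along the trajectory rather than as an identity at a fixed equilibrium.
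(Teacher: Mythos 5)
Your proof is correct and follows essentially the same route as the paper's: integration by parts combined with the CCW bound (\ref{eq:CCW}) and the nonnegativity from \Cref{lemma:pc} shows $\int_0^\infty \bm{p}(t)^T\dot{\bm{x}}(t)\,dt$ is finite, Barbalat's lemma via \Cref{ass:conti} gives $\bm{p}(t)^T\dot{\bm{x}}(t)\rightarrow0$, and the decomposition (\ref{eq:for_LaSalle}) together with positive correlation and the aggregation argument of \Cref{thm:NE} yields the best-response conclusion. Your extra bookkeeping (BIBO-stability bounding the boundary term, uniform continuity for Barbalat, and the time-varying interpretation of $\textit{BR}_{\mathcal{K}}(\bm{p}(t))$) merely makes explicit details the paper leaves implicit.
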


\begin{proof}
    We first prove that $\bm{p}(t)^T\dot{\bm{x}}(t)\rightarrow0$. By integration by parts and (\ref{eq:CCW}), we obtain
    \begin{align}
        \int_0^\infty \bm{p}(t)^T\dot{\bm{x}}(t)\ dt&=\left[\bm{p}(t)^T\bm{x}(t)\right]_0^{\infty}-\int_0^\infty \dot{\bm{p}}(t)^T\bm{x}(t)\ dt\nonumber\\
        &\leq 2\|\bm{p}\|_\infty-\alpha_c<\infty.\label{eq:int}
    \end{align}
    Since $\bm{p}(t)^T\dot{\bm{x}}(t)\geq0$ by (\ref{eq:hier_pc}), the integral in (\ref{eq:int}) exists. By (\ref{eq:for_LaSalle}) and \Cref{ass:conti}, $\bm{p}(t)^T\dot{\bm{x}}(t)$ is Lipschitz continuous. Therefore, by Barbalat's lemma, we obtain that $\bm{p}(t)^T\dot{\bm{x}}(t)\rightarrow0$. Then, by (\ref{eq:for_LaSalle}) and positive correlation, we conclude that $\bm{x}(t)$ tends to be the best response to $\bm{p}(t)$ within $\mathcal{K}$.
    % Next, by (\ref{eq:hier_pc}) and continuity, we obtain $\dot{\bm{s}}^{i,j}(t)^T\bm{\pi}^{i,j}(t)\rightarrow0$ or $M^i_j(t)\rightarrow0$, for all $i,j$.
    % in either case, we deduce $\mathcal{V}^{i,j}\rightarrow\bm{0}$, for all $i,j$. 
    % Moreover, the PDM has the static state recovery property (\ref{eq:recover}). Therefore, we apply \Cref{thm:NE} and conclude that $\bm{x}(t)\rightarrow\textit{NE}_{\mathcal{K}}(F)$. 
\end{proof}

\Cref{thm:PDM} implies, when $\dot{\bm{x}}(t)\rightarrow\bm{0}$, $\bm{x}(t)\rightarrow \textit{NE}_{\mathcal{K}}(F)$, which follows from (\ref{eq:recover}). Even when $\dot{\bm{x}}(t)\not\rightarrow\bm{0}$, we can nevertheless conclude that $\bm{x}(t)$ tends to be the best response to $\bm{p}(t)$ within $\mathcal{K}$. This weaker convergence result compared to \Cref{thm:potential} is since $\bm{p}(t)$ is given by the dynamic payoff model instead of the static one.
%%%%%%%%%%%%%%%%%%%%%%%%
%%%%%%%%%%%%%%%%%%%%%%%%
%%%%%%%%%%%%%%%%%%%%%%%%
\section{Applications to scenarios with constraints}
\label{sec:app}

We demonstrate the results on a routing scenario where travelers use a navigation app that provides three route selection channels, each displaying an estimated travel time based on the average travel time of current users in that channel. Travelers are only informed of the estimated travel times and are unaware of differences between the channels, {\it e.g.}, variations in routing sets or dispatching rules. Further,  travelers select a channel based on the estimated times displayed. Each channel then assigns its users to routes within its routing set according to its dispatching strategy. For simplicity, we assume that all channels share the same routing set, comprising all possible routes. An illustration is provided in \Cref{fig:app}, where $\bm{s}^Y(t)\in\Delta^3$ represents how the channel $Y=A,B,C$, dispatches its users among the three routes at time $t$.

\begin{figure}[!ht]
    \centering
    \includegraphics[width=\linewidth]{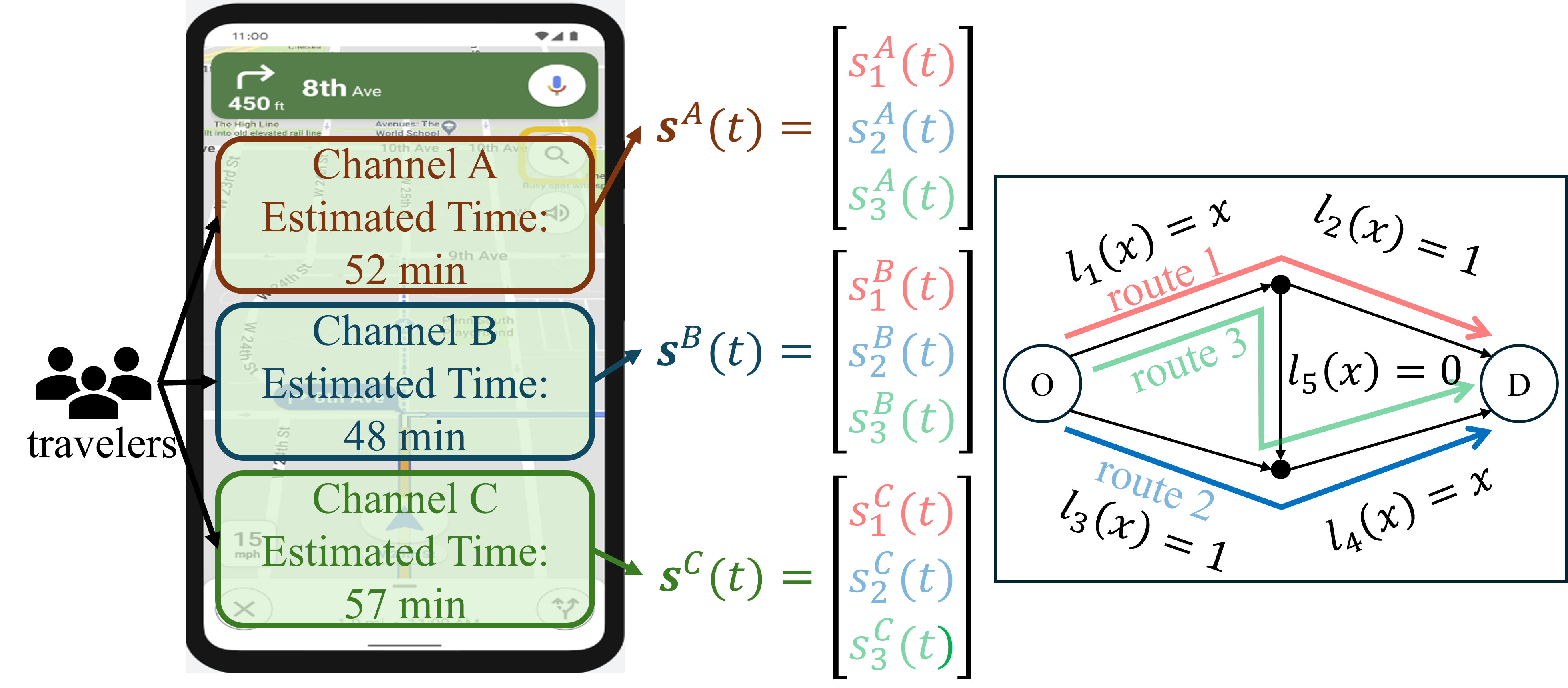}
    \caption{Travelers select channels based on the estimated times and each channel assigns its users to routes.}
    \label{fig:app}
\end{figure}

% \begin{figure*}
%     \begin{subfigure}[t]{\textwidth}
%         \includegraphics[width=.17\textwidth]{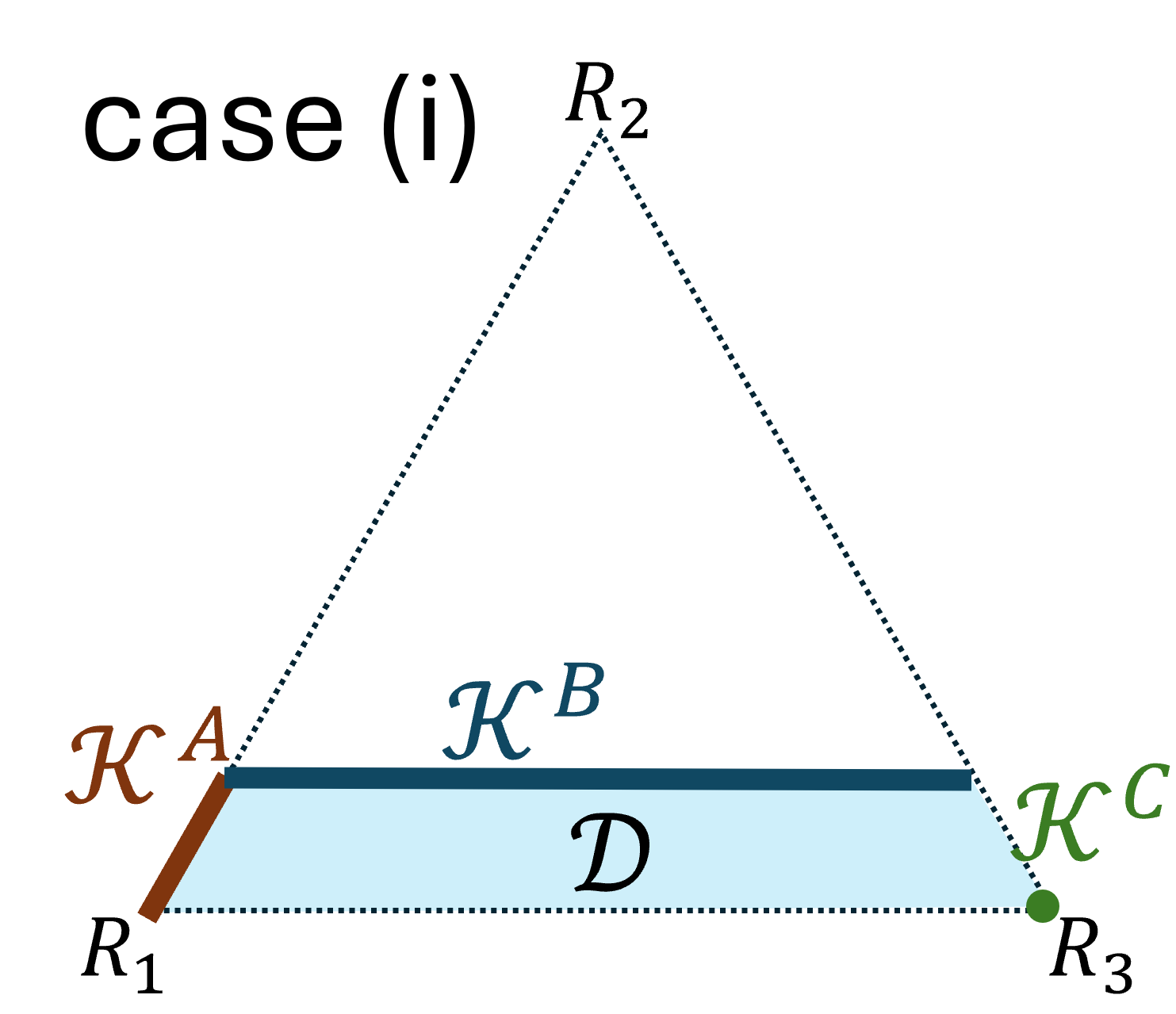}
%         \includegraphics[width=.82\textwidth]{images/case1_row.png}
%     \end{subfigure}
%     \begin{subfigure}[t]{\textwidth}
%         \includegraphics[width=.17\textwidth]{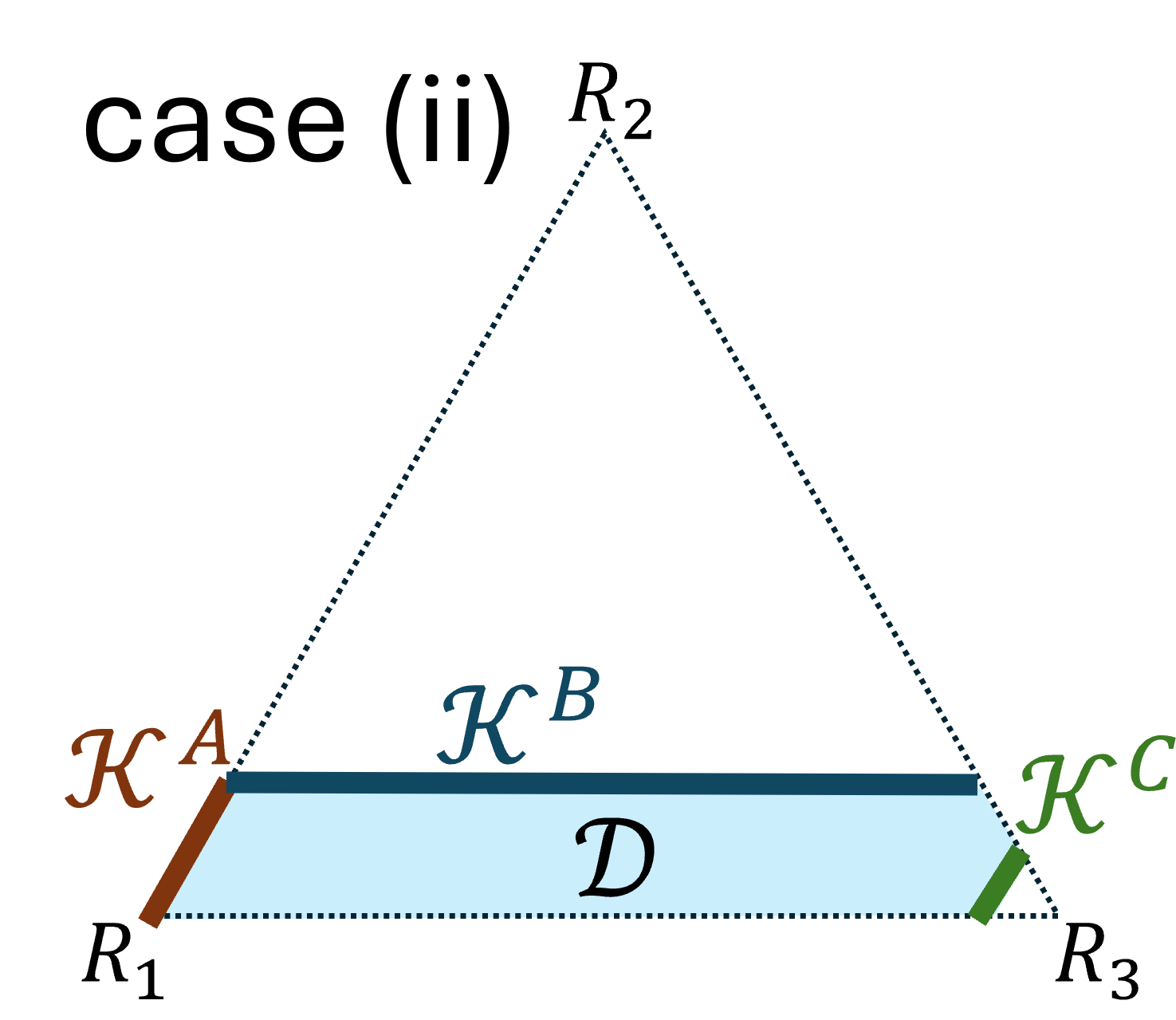}
%         \includegraphics[width=.82\textwidth]{images/case2_row.png}
%     \end{subfigure}
%     \caption{The upper row is for case (i), while the lower row is for case (ii). The first column illustrates the constructions of $\mathcal{K}^A$, $\mathcal{K}^B$, and $\mathcal{K}^C$, while the rest of the 5 columns provide the simulation results including the distribution of travelers' selections, the social state, and the channels' dispatching strategies.}
%     \label{fig:sim}
% \end{figure*}

Let $\bm{x}(t)\in\Delta^3$ be the proportion of travelers on each route, then given the delay functions for links displayed in \Cref{fig:app}, we define the payoff by (negative) travel times:
% $F(\bm{x}(t))=-[1+x_1(t)+x_3(t),1+x_2(t)+x_3(t),x_1(t)+x_2(t)+2x_3(t)]^T$.
\begin{align}
    F(\bm{x}(t))=-\begin{bmatrix}
        1+x_1(t)+x_3(t)\\
        1+x_2(t)+x_3(t)\\
        x_1(t)+x_2(t)+2x_3(t)
    \end{bmatrix}.
\end{align}
Then, at time $t$, each channel $Y$ displays the estimated travel time by $-{\bm{s}^Y}(t)^TF(\bm{x}(t))$ and updates its dispatching strategy $\bm{s}^Y(t)$ based on $F(\bm{x}(t))$, for $Y=A,B,C$. This forms a two-layer structure with $\bm{W}^1=\bm{I}_3$ and $\bm{W}^2=\left[\bm{I}_3,\bm{I}_3,\bm{I}_3\right]\in\mathbb{R}^{3\times9}$, where $\bm{I}_3$ is the $3$-by-$3$ identity matrix. Note that $\bm{s}^A(t)$, $\bm{s}^B(t)$, $\bm{s}^C(t)$ in this example are aliases of $\bm{s}^{2,1}(t)$, $\bm{s}^{2,2}(t)$, $\bm{s}^{2,3}(t)$.
% in the proposed hierarchical framework, respectively.

We next demonstrate how to leverage this two-layer framework to ensure constraint satisfaction. Consider two scenarios: (i) the proportion of travelers on route 2, $x_2(t)$, must remain below $15\%$, and (ii) the proportion of travelers on route 3, $x_3(t)$, must not exceed $90\%$. The feasible sets, denoted as $\mathcal{D}$, are shown in \Cref{fig:app_sets},
% Notably, the network in \Cref{fig:app} corresponds to Braess's network \cite{Braess1968}, where all travelers selecting route 3 forms the unique Nash equilibrium, i.e., the vertex $R_3$ in \Cref{fig:sim}.
where the vertex $R_3$ is the unique Nash equilibrium. 
%Thus, the two scenarios differ in that in 
In case~(i), the constraint set includes the original Nash equilibrium, whereas in case~(ii), it does not.

% To ensure constraint satisfaction at all times, we recall that $\mathcal{K}$ (\ref{eq:K}), determined by the hierarchy structure and the strategy distribution of interest for the proxies, guarantees a forward invariant set for $\bm{x}(t)$. Therefore, b

We assume decisions made by travelers are unconstrained, i.e., $\mathcal{K}^{1,1}=\Delta^3$. If we design $\mathcal{K}^A,\mathcal{K}^B,\mathcal{K}^C$ such that $\mathcal{K}\subseteq\mathcal{D}$, then $\bm{x}(t)\in\mathcal{D}$ is ensured for all $t$. Since $\bm{W}^1=\bm{I}_3$, $\mathcal{K}$ is a convex combination of the sets $\mathcal{K}^A$, $\mathcal{K}^B$, and $\mathcal{K}^C$. We present design examples in \Cref{fig:app_sets} where $\mathcal{K}=\mathcal{D}$. Let each channel $Y$, update 
%its dispatching strategy 
$\bm{s}^Y(t)$ by constrained best response dynamics (\ref{eq:CBR}):
\begin{align}
    % \dot{\bm{s}}^Y(t)\in\arg\max_{\bm{y}\in\mathcal{K}^Y}\ \bm{y}^TF(\bm{x}(t))-\bm{s}^Y(t), \quad t\geq0.
    \dot{\bm{s}}^Y(t)\in\textit{BR}_{\mathcal{K}^Y}\left(F(\bm{x}(t))\right)-\bm{s}^Y(t), \quad t\geq0.
    \label{eq:app_dyn}
\end{align}
% In case~(i), the designed $\mathcal{K}^C=\{R_3\}$ is a singleton at vertex $R_3$. Thus, the dynamics (\ref{eq:app_dyn}) for channel $C$ reduces to a constant dispatching strategy $\bm{s}^C(t)=[0,0,1]^T$ for all $t$.
% \begin{figure}[!ht]
%     \centering
%     \begin{subfigure}[t]{.42\linewidth}
%     \includegraphics[width=\textwidth]{images/app_set1.png}
%     \caption{Set 1}
%     \label{fig:app_set1}
%     \end{subfigure}
%     \quad \quad
%     \begin{subfigure}[t]{.42\linewidth}
%     \includegraphics[width=\textwidth]{images/app_set2.png}
%     \caption{Set 2}
%     \label{fig:app_set2}
%     \end{subfigure}
%     \caption{Sets}
%     \label{fig:app_set}
% \end{figure}
%By \Cref{thm:potential}, and given that 
Since congestion games are potential games, we conclude by \Cref{thm:potential} that
%, while travelers are unaware of the constraints and react solely based on the estimated times displayed, 
$\bm{x}(t)$ satisfies the constraint and converges to a Nash equilibrium within $\mathcal{D}$, although travelers are unaware of the constraints and react only to the estimated times.
% The simulation results for case~(i) and case~(ii) are provided in \Cref{fig:case1} and \Cref{fig:case2}, respectively, with further discussion below.

\begin{figure}[!ht]


    \centering
    \includegraphics[width=.42\columnwidth]{images/app_set1.png}
    \quad\quad
    \includegraphics[width=.42\columnwidth]{images/app_set2.png}
    \caption{Constructions of $\mathcal{K}^A,\mathcal{K}^B,\mathcal{K}^C$ such that $\mathcal{K}=\mathcal{D}$.}
    \label{fig:app_sets}
\end{figure}

\Cref{fig:traj} demonstrates the simulation results. Thin yellow lines represent the channels' dispatching strategy trajectories, and green dots and red crosses indicate the initial and final dispatching strategies, respectively. The dotted pink line is the route distribution trajectory when travelers select routes directly.
%, as illustrated in \Cref{fig:direct}. 
%Although it eventually reaches 
Despite convergence to the Nash equilibrium $R_3$, the constraints are violated during evolution. In contrast, the trajectory for the two-layer case, shown by the thick blue line, converges within $\mathcal{D}$. In case (ii), it converges to a \emph{new} Nash equilibrium. Compared to the original Nash equilibrium $R_3$, flows are dispersed, showcasing the ability to steer the system toward an equilibrium with greater social efficacy.

\begin{figure}[!ht]
    \centering
    \includegraphics[width=.49\columnwidth]{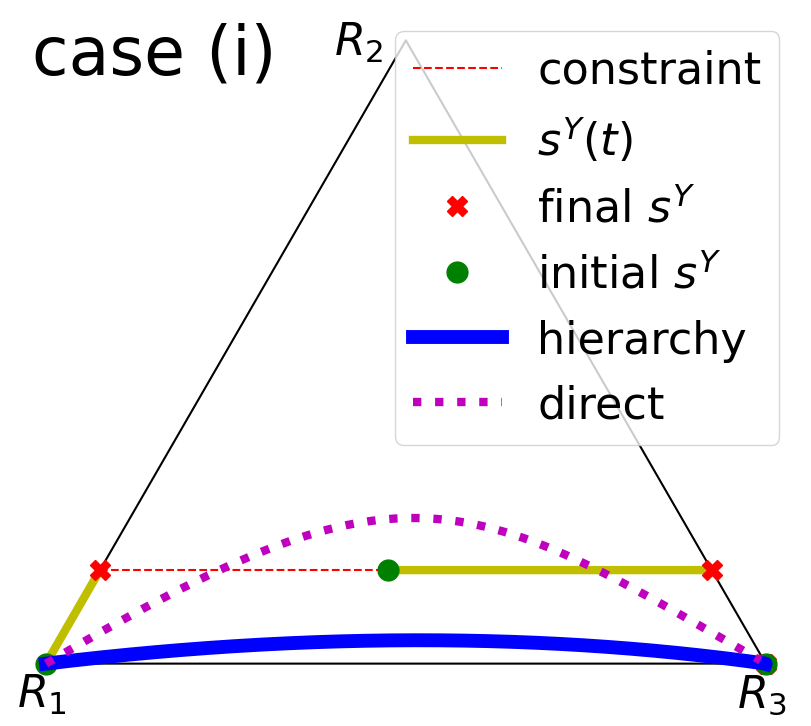}
    \hfill
    \includegraphics[width=.49\columnwidth]{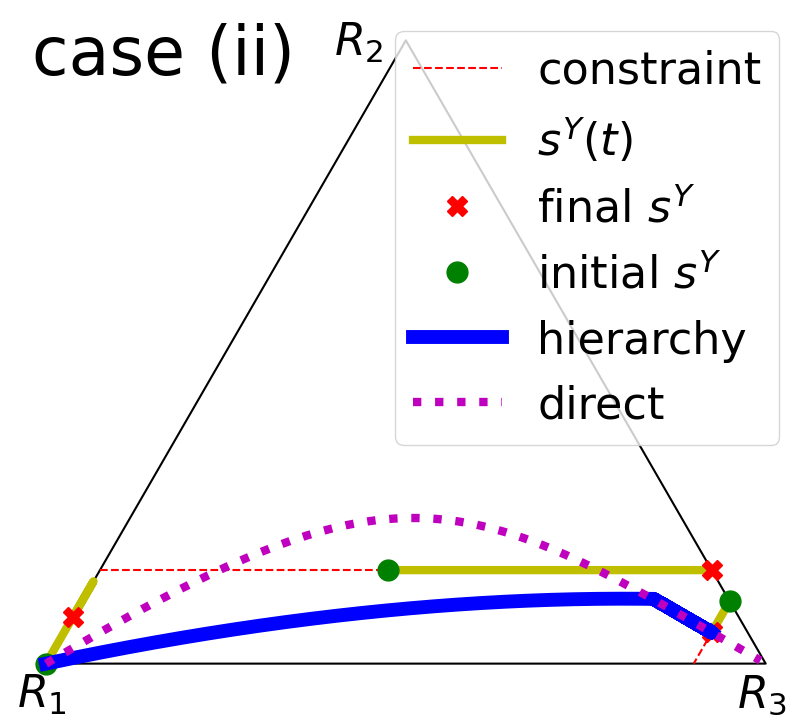}
    \caption{Trajectory plots projected from the simplex to 2D.}
    \label{fig:traj}
\end{figure}

\section{Conclusions}
\label{sec:conclu}

%This paper 
We introduced a hierarchical decision-making framework for population games, extending the classical formulation to multiple decision layers. We characterized equilibrium properties and established convergence under payoff conditions, such as potential games and CCW payoff dynamics. Using these results, we proposed a novel approach for population games with constraints that individuals are unaware of.
% Future research will explore the integration of contractive games \cite[Chapter~3]{Sandholm2010} within this hierarchical framework.
Some interesting avenues for future work include investigating broader classes of learning dynamics that satisfy \Cref{ass:dyn} over general convex sets, as well as exploring broader classes of games that converge within the hierarchical framework.

%%%%%%%%%%%%%%%%%%%%%%%%
%%%%%%%%%%%%%%%%%%%%%%%%
%%%%%%%%%%%%%%%%%%%%%%%%
\bibliographystyle{IEEEtran}
\bibliography{reference.bib}

\end{document}